\newtheorem{remark}{Remark}
\newtheorem{theorem}{Theorem}
\newtheorem{lemma}{Lemma}
\newtheorem{corollary}{Corollary}
\newtheorem{proposition}{Proposition}
\def\ScaleIfNeeded{%
\ifdim\Gin@nat@width>\linewidth \linewidth \else \Gin@nat@width
\fi } \makeatother
\begin{document}
\title{Resource Allocation in STAR-RIS-Aided Networks: OMA and NOMA}
% \markboth{\textit{A Manuscript Submitted to The IEEE Communications Magazine} }

\author{
Chenyu Wu, Xidong Mu, \IEEEmembership{Graduate Student Member, IEEE}, Yuanwei~Liu, \IEEEmembership{Senior Member,~IEEE},  Xuemai Gu, \IEEEmembership{Member,~IEEE}, and Xianbin Wang, \IEEEmembership{Fellow,~IEEE}
\thanks{C. Wu and X. Gu are with the School of Electronic and Information Engineering, Harbin Institute of Technology (HIT), Harbin, 150001, China. (e-mail: \{wuchenyu, guxuemai\}@hit.edu.cn).}
\thanks{X. Mu is with the School of Artificial Intelligence, Beijing University of Posts and Telecommunications, Beijing, 100876, China (email: muxidong@bupt.edu.cn).}
\thanks{Y. Liu is with the School of Electronic Engineering and Computer Science, Queen Mary University of London, London E1 4NS, U.K. (email: yuanwei.liu@qmul.ac.uk).}
\thanks{X. Wang is with the Department of Electrical and Computer Engineering, Western University, London, ON N6A 5B9, Canada (e-mail: xianbin.wang@uwo.ca).}
}

\maketitle

%\begin{abstract}
%This article focuses on the exploitation of reconfigurable intelligent surfaces (RISs) in multi-user networks employing orthogonal multiple access (OMA) or non-orthogonal multiple access (NOMA), with an emphasis on investigating the interplay between NOMA and RIS. Depending on whether the RIS reflection coefficients can be adjusted only once or multiple times during one transmission, we distinguish between \emph{static} and \emph{dynamic} RIS configurations. In particular, the capacity region of RIS aided single-antenna NOMA networks is characterized and compared with the OMA rate region from an information-theoretic perspective, revealing that the dynamic RIS configuration is capacity-achieving. Then, the impact of the RIS deployment location on the performance of different multiple access schemes is investigated, which reveals that asymmetric and symmetric deployment strategies are preferable for NOMA and OMA, respectively. Furthermore, for RIS aided multiple-antenna NOMA networks, three novel joint active and passive beamformer designs are proposed based on both beamformer based and cluster based strategies. Finally, open research problems for RIS-NOMA networks are highlighted.
%\end{abstract}
%\begin{IEEEkeywords}
%Heterogeneous ultra dense networks, non-orthogonal multiple access, massive connectivity, user association, and resource allocation.
%\end{IEEEkeywords}

%
\vspace{-1.2cm}
\begin{abstract}	
Simultaneously transmitting and reflecting reconfigurable intelligent
surface (STAR-RIS) is a promising technology to achieve full-space coverage by splitting the incident signal into the transmitted and reflected signals towards both sides of the surface. This paper investigates the resource allocation problem in a STAR-RIS-assisted multi-carrier communication network. To maximize the system sum-rate, a joint optimization problem of channel assignment, power allocation, and transmission- and reflection-beamforming at the STAR-RIS for orthogonal multiple access (OMA) is first formulated, which is a mixed-integer non-linear programming problem. To solve this challenging problem, we first propose a channel assignment scheme utilizing matching theory and then invoke the alternating optimization-based method to optimize the resource allocation policy and beamforming vectors iteratively. Furthermore, the sum-rate maximization problem for non-orthogonal multiple access (NOMA) with flexible decoding orders is investigated. To efficiently solve it, we first propose a location-based matching algorithm to determine the sub-channel assignment, where a transmitted user and a reflected user are grouped on a sub-channel. Based on this \emph{transmission-and-reflection} sub-channel assignment strategy, a three-step approach is proposed, where the decoding orders, beamforming-coefficient vectors, and power allocation are optimized by employing semidefinite programming, convex upper bound approximation, and geometry programming, respectively. Numerical results unveil that: 1) For OMA, a general design that includes same-side user-pairing for channel assignment is preferable, while for NOMA, the proposed transmission-and-reflection scheme can achieve comparable performance as the exhaustive search-based algorithm. 2) The STAR-RIS-aided NOMA network significantly outperforms the networks employing conventional RISs and OMA. %3) The performance gain of NOMA over OMA is significantly enhanced by deploying the STAR-RIS.

\end{abstract}
\begin{IEEEkeywords}
Non-orthogonal multiple access, orthogonal multiple access, reconfigurable intelligent surface, resource allocation, simultaneous transmission and reflection
\end{IEEEkeywords}

\section{Introduction}
%The emergence of killer applications and the explosive growth of data traffic have promoted the revolution of the sixth-generation (6G) wireless systems, which necessitates novel communication paradigms to match the new technical requirements\cite{6g1,6g2}. Great research interest has been witnessed in new technologies, such as massive multiple-input multiple-output (MIMO), and TeraHertz (THz) communications\cite{6g3}, which have been proved to be effective to some extent. However, due to the randomness of radio environment, the propagation of wireless signals still suffers from fading caused by random scattering, reflection, and refraction, which degrades the quality of the received signals and limits the performance of wireless networks.

%it is still challenging to realize these technologies due to the severe signal deterioration, high operational costs, and inter/intra cell interference. 

The emergence of killer applications and the explosive growth of data traffic have promoted the revolution of the sixth-generation (6G) wireless systems, which necessitates novel communication paradigms to match the new technical requirements\cite{6g1,6g2}. With the development of metasurfaces, reconfigurable intelligent surfaces (RISs) have been proposed as a promising candidate to improve the performance of wireless communication networks and give rise to the concept of ``smart radio environments'' (SREs)\cite{3,4}. An RIS is a planar surface of massive elements made from electromagnetic material. %Different from other mentioned paradigms, which still leave the signal propagation uncontrollable, RISs can turn the wireless environment into a reconfigurable entity, which can be optimized according to desired objectives. 
By controlling the phase and amplitude responses of these elements, RISs can effectively adjust the propagation of the incident wireless signals to create desirable multi-path effects, such as enhancing the desired received signals or eliminating inter/intra-cell interference\cite{survey_ris}. Moreover, due to the nearly-passive nature, RISs can be deployed to improve the spectrum efficiency and the reliability of wireless systems using much lower energy consumption compared to active transceivers. Given the aforementioned appealing characteristics, RISs have aroused great interest in both academia and industry.

However, most existing works assume that RISs can only reflect the incident signal\cite{simulation,ofdm,uav,swipt,nomaris1,nomaris2,mu1,nomaris3,nomaris5,nomaris4,nomaris6}. As a result, the communication system can not exploit the benefits of the RIS when the transmitter and receiver are not located on the same side of the RIS. To overcome this drawback, a novel concept of simultaneously transmitting and reflecting RISs (STAR-RISs)\cite{STAR}, also called intelligent omni-surfaces (IOS)\cite{ios}, is proposed. Different from conventional reflecting-only RISs, each element of STAR-RISs can transmit and reflect the incident signal simultaneously, thus breaking the geographical limitations of deploying RISs and achieving \emph{full-space coverage}. Moreover, STAR-RISs offer new degrees-of-freedom (DoF) to SREs since both the transmission- and reflection-coefficients can be designed to further enhance the performance of wireless networks.

Inspired by the benefits of STAR-RISs, it is promising to employ STAR-RIS to improve the performance of communication systems. However, to guarantee the high performance of the next-generation communication system, it is still necessary to design a transmission scheme to meet the heterogeneous demands on high throughput and reliability. Existing multiple access
techniques mainly contain two categories: orthogonal multiple access (OMA) and non-orthogonal multiple access (NOMA). Unlike OMA, where orthogonal time/frequency resource blocks are utilized to serve each user, NOMA allows multiple users to share the same resource blocks by employing superposition coding at the transmitters and successive interference cancellation (SIC) at the receivers\cite{proceeding}. Owing to this unique feature, NOMA has been regarded as an enabling technology to improve spectral efficiency and support massive connectivity\cite{noma,proceeding} compared to OMA. 
Driven by the aforementioned issues, this paper aims to investigate the promising application of STAR-RISs in wireless networks considering different transmission schemes.

%It is worth mentioning that besides multiplexing
%gain brought by NOMA, multi-carrier systems can providing additional degrees of freedom offered by multiuser diversity, which motivates us to study multi-carrier NOMA systems. Inspired by the aforementioned benefits, in this paper, we aim to investigate the promising combination of STAR-RISs and multi-carrier NOMA techniques to improve the performance of communication systems.

\subsection{Prior Works}

\emph{1) Studies on RIS-Assisted Wireless Systems:}
There have been extensive works on various applications of RISs in wireless systems, such as the RIS-aided multiple-input-single-output (MISO)\cite{simulation}, RIS-aided orthogonal frequency division multiplexing (OFDM)\cite{ofdm}, RIS-aided unmanned aerial
vehicle (UAV) systems\cite{uav}, RIS-aided simultaneous wireless information and power transfer (SWIPT) systems\cite{swipt}. Nevertheless, these research contributions mainly considered single user scenario or the OMA scheme. Driven by the benefits of RIS and NOMA, their combination has been investigated in many prior works\cite{nomaris1,mu1,nomaris2,nomaris3,nomaris5,nomaris4,nomaris6}. Specifically, the authors of \cite{nomaris1} considered an RIS-aided NOMA system with single-input-single-output (SISO) and MISO setup, respectively, where the formulated minimum rate maximization problem was solved optimally using an iterative algorithm. The total transmit power was minimized in an RIS-empowered MISO-NOMA network by designing the joint beamforming using an efficient difference-of-convex method\cite{nomaris2}. In \cite{mu1}, an RIS-assisted MISO-NOMA system was studied, where the joint active and passive beamforming was designed by alternating optimization. %Extended works considered scenarios including multi-channel, multi-cell,  and multi-cluster modes. 
The authors of\cite{nomaris3} presented a multi-channel RIS-NOMA framework to study the throughput maximization problem, in which the channel assignment, decoding order, power allocation, and reflection coefficients were jointly optimized. The authors of \cite{nomaris5} further considered the multi-cell RIS-aided NOMA networks and proposed a resource allocation algorithm using convex optimization methods and matching theory. For the multi-cluster RIS-sided MISO-NOMA scenarios, to minimize the total transmit power, the beamforming vectors at the base station (BS) and the phase shift design at the RIS were jointly optimized in\cite{nomaris4,nomaris6}.

\emph{2) Studies on STAR-RIS-Aided Communication Systems:} There have been some initial works studying potential profits of deploying STAR-RIS in wireless communication systems. Specifically, in \cite{SRAR}, the hardware model of STAR-RISs was presented and the diversity gain over conventional RISs was analyzed based on the proposed channel models.  In \cite{star2}, the coverage characterization of STAR-RIS-aided communication networks was investigated, and the formulated total coverage range maximization problems for NOMA and OMA networks were solved.
A STAR-RIS-aided communication system for both unicast and multicast transmission was considered in\cite{star1}, where three practical operating protocols for STAR-RISs were proposed and the corresponding joint active and passive beamforming for minimizing transmit power was studied. An IOS-assisted communication system was considered in \cite{ios}, where the formulated sum-rate maximization problem was solved by jointly designing phase shifts at the IOS and beamforming at the BS. 
\subsection{Motivations and Contributions}
%The advantages of deploying STAR-RIS to assist wireless communication systems are twofold: 
%1) STAR-RISs have both transmission and reflection functions, which provides ubiquitous wireless coverage for users located at both sides. 2) STAR-RISs provide new DoFs of network resource allocation design, such as SIC decoding orders and passive beamforming matrix optimization. 
%
%It is also known that NOMA transmission benefits a lot from asymmetric channels. 

Compared with conventional reflecting-only RISs\cite{simulation,ofdm,uav,swipt,nomaris1,nomaris2,mu1,nomaris3,nomaris4,nomaris5,nomaris6}, a full-space SRE can be achieved by deploying STAR-RISs\cite{STAR}. To fully reap the potentials of STAR-RISs, it is fundamental to design the resource allocation policy based on different multiple access schemes. However, the formulated problem for OMA is non-trivial to solve since the variables like transmit power and transmission- and reflection-beamforming vectors are highly coupled. For NOMA, the problem is more challenging due to the flexible decoding orders and the existence of co-channel interference. Therefore, it is of vital importance to develop efficient algorithms for STAR-RIS-aided wireless communication systems. Sparked by the above backgrounds, in this paper, we investigate the efficient design of  STAR-RIS-aided OMA and NOMA networks, respectively. The channel assignment, resource allocation policy at the access point (AP), and transmission- and reflection-beamforming at the STAR-RIS are jointly optimized to enhance the system performance. To the best of our knowledge, the research on resource allocation strategy in STAR-RIS-aided communication networks is still in its infancy.

In light of the above motivations and challenges, the main contributions of this paper are summarized as follows:

\begin{itemize} %[]里面加入符号
	\item We consider a STAR-RIS-assisted multi-carrier downlink communication network, where an AP sends information to multiple users with the aid of a STAR-RIS. We formulate a sum-rate maximization problem by jointly optimizing the channel assignment, transmission- and reflection-coefficient vector, power allocation, and time allocation factor/decoding order for OMA and NOMA, respectively, which is mixed-integer non-linear programming (MINLP).
	
	\item For OMA, the channel assignment strategy is first designed using matching theory. Then, we propose an efficient algorithm by invoking alternating optimization to solve the resulting resource allocation problem, where the time allocation factor, power allocation, and transmission- and reflection-coefficient vector are optimized iteratively.	
	\item For NOMA, we first propose a low-complexity location-based matching algorithm (LMA) to deal with the channel assignment, where a transmission user and a reflection user are selected to be paired on a sub-channel. With the derived channel assignment, an efficient three-stage algorithm is developed. First, we obtain the NOMA decoding orders by finding the beamforming vectors that lead to the maximum combined channel gain. Then, the transmission- and reflection-coefficient vectors are optimized using convex upper bound (CUB) approximation. Finally, the problem of power allocation is reformulated as geometry programming (GP), which can be solved optimally.
	\item Numerical results demonstrate that, for OMA, the proposed design can achieve near-optimal performance when considering all possibilities of channel assignment. For NOMA, the proposed LMA yields a good complexity-optimality tradeoff thanks to the transmission-reflection user-pairing scheme. Moreover, the sum-rate achieved by the STAR-RIS-NOMA network is more significant than that achieved by STAR-OMA and conventional RIS-aided networks. 
\end{itemize}
 
\subsection{Organizations and Notations}
The remainder of this paper is organized as follows. In Section II, we present the system model and problem formulation. In Sections III and IV, efficient algorithms are proposed to solve the resource allocation problem for OMA and NOMA, respectively. Simulation results are presented in Section V, which is followed by conclusions in Section VI.

\emph{Notations:} In this paper, vectors and matrices are denoted as boldface lower-case and boldface
capital letters, respectively. $\mathbb{C}^{M\times1}$ denotes the space of $M\times1$ complex vectors. diag($\mathbf{x}$) denotes a diagonal matrix whose diagonal elements
are the corresponding elements in vector $\mathbf{x}$. Diag($\mathbf{A}$) denotes a vector whose elements are extracted from the main diagonal of matrix $\mathbf{A}$.
$\mathbf{x}^H$ denotes the conjugate transpose of vector $\mathbf{x}$. Tr($\mathbf{X}$) and rank($\mathbf{X}$) denote the trace and rank of matrix $\mathbf{X}$, respectively. real($x$) and imag($x$) denote the real and imaginary
part of a complex number $x$, respectively. $\triangleq$ stands for definition.
\section{System Model and Problem Formulation}

\begin{figure}[t!]
  \centering
  \includegraphics[width=0.6\textwidth]{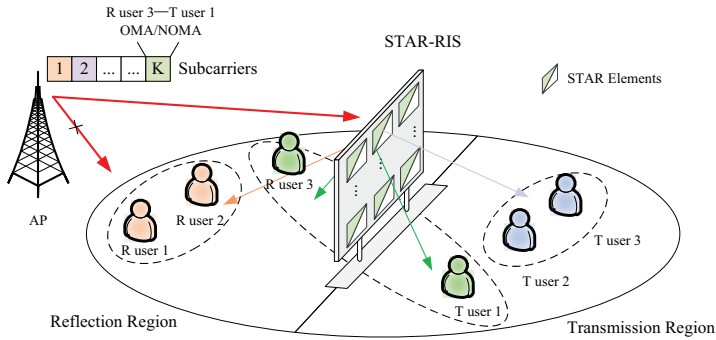}\\
  \caption{Illustration of a STAR-RIS aided downlink multi-carrier communication network.}
\end{figure}
\subsection{System Model}
As illustrated in Fig. 1, we consider a STAR-RIS-assisted downlink communication system, where a single-antenna AP communicates with $I$ single-antenna users with the aid of a STAR-RIS composed of $M$ elements. 
%Since the STAR-RIS can realize full-space coverage by simultaneously transmitting and reflecting the incident signal, we assume that a set $\mathcal{I}$ of $I$ users are evenly located at the two sides of the STAR-RIS. We refer to users that are located behind the STAR-RIS (i.e., transmission region) as T users, denoted as $\mathcal{I}_t=\{1,...,I_t\}$, while users that are located in front of the STAR-RIS (i.e., reflection region) are referred to as R users, denoted as $\mathcal{I}_r=\{1,...,I_r\}$. 
At the AP, the total bandwidth $W$ is divided into $K$ sub-channels, denoted by $\mathcal{K}=\{1,...,K\}$, and each sub-channel can be occupied by at most two users to improve the spectrum efficiency. 
%\footnote{In this paper, to get design insights of NOMA user-pairing, we assume $N_\text{max}=2$. The proposed algorithm can be easily extended to systems with $N_\text{max}>2$.} 
Moreover, we assume that each user $i\in\mathcal{I}=\{1,...,I\}$ can occupy at most one subcarrier. Hence, all $I$ users are divided into $K$
pairs, i.e., $I=2K$.  Let binary variable $\lambda_{k,i}$ denote whether the $k$-th subchannel is assigned to user $i$ and $\mathcal{I}_k$ denote the
collection of the user-pair on sub-channel $k$, i.e., $\mathcal{I}_k=\{i|\lambda_{k,i}=1,\forall{i}\in\mathcal{I}\}$,$\forall{k}\in\mathcal{K}$. In this paper, we assume that perfect channel state information (CSI) is available at the AP to study the maximum performance gain.

The incident signals impinged on each element of a STAR-RIS can be split into transmitted and reflected signals to reach users at both sides of the surface. As shown in Fig. 1, the left-half part is known as the reflection region, while the right-half part is referred to as the transmission region. In this paper, we assume that users are equally located in these two regions. Further, users that are located at the transmission region are referred to as T users, whose set is denoted by $\mathcal{T}=\{T_1,...,T_K\}$. Similarly, users that are located at the reflection region are called R users, whose set is $\mathcal{R}=\{R_1,...,R_K\}$. 

\subsection{STAR-RIS Model}

 %In this paper, we consider the energy splitting protocol proposed in , where all elements of the STAR-RIS can operate in general simultaneous transmission- and reflection-mode. 
To mathematically characterize the operation model of the STAR-RIS, we define transmission- and reflection-coefficient vectors (also called beamforming vectors for simplification) as\cite{star1}
%\begin{equation}
%	\begin{aligned}
%			\boldsymbol{v}_n&\triangleq	\left\{\begin{gathered}
%				\begin{aligned}
%						&\boldsymbol{v}_t,\ \ \ {\text{if user $i$ is in transmission region,}}\\
%				&{{\mathbf{v}}_r}, \ \ \ {\text{if user $i$ is in reflection region,}} \\
%				\end{aligned}
%			\end{gathered}\right.
%\\	&=\left\{\begin{gathered}
%	 [ {\sqrt{{\beta _1^t}}{e^{j{\theta _1}}},{\sqrt{{\beta _2^t}}{e^{j{\theta _2}}}}, \ldots ,\sqrt{{\beta_M^t}}{e^{j{\theta _M}}}} ]^T. \\
%	 [ {\sqrt{\beta_1^r}{e^{j{\varphi _1}}},{\sqrt{\beta_2^r}{e^{j{\varphi _2}}}}, \ldots ,\sqrt{\beta_M^r}{e^{j{\varphi _M}}}} ]^T.
%\end{gathered}\right.
%	\end{aligned}
%\end{equation}
\begin{equation}
		\begin{aligned}
					\boldsymbol{v}_n
		\triangleq\left\{\begin{gathered}
				 \mathbf{v}_t=[ {\sqrt{{\beta _1^t}}{e^{j{\theta _1^t}}},{\sqrt{{\beta _2^t}}{e^{j{\theta _2^t}}}}, \ldots ,\sqrt{{\beta_M^t}}{e^{j{\theta _M^t}}}} ]^T. \\
				 \mathbf{v}_r=[ {\sqrt{\beta_1^r}{e^{j{\theta _1^r}}},{\sqrt{\beta_2^r}{e^{j{\theta_2^r}}}}, \ldots ,\sqrt{\beta_M^r}{e^{j{\theta_M^r}}}} ]^T.
			\end{gathered}\right.
			\end{aligned}
	\end{equation}
where $n\in\{t,r\}$ indicates whether the downlink signals reaching a specific user is transmitted or reflected by the STAR-RIS, $\sqrt {{\beta^t_m}},\sqrt {{\beta^r_m}}\in[0,1]$ and $\theta_m^t,\theta_m^r\in[0,2\pi),m\in\mathcal{M}=\{1,...,M\}$ denote the amplitude and phase-shift adjustments imposed on the incident signal that impinges on the $m$-th element during transmission and reflection, respectively. %Similarly, we define $\mathbf{\Theta}_n\triangleq\text{diag}(\mathbf{v}_n),n\in\{t,r\}$ as the corresponding transmission- and reflection-coefficient matrices (beamforming matrices). 
Note that due to the law of energy conservation, the sum energy of the transmitted and reflected signals has to be equal to that of the incident signals for each element, i.e. $\beta^t_m+\beta^r_m=1$ has to be satisfied\cite{STAR}.

%\begin{figure}[t!]
%  \centering
%  \includegraphics[width=3in]{eps/simple1.eps}\\
%  \caption{Illustration of an RIS aided two-user downlink transmission via simultaneous reflection and refraction.}\label{configurations}
%\end{figure}
\subsection{Transmission Model}
In this paper, we consider both OMA and NOMA transmission schemes. We assume that the direct AP-user links are blocked by obstacles and the communication is guaranteed by STAR-RIS-enabled transmission and reflection links.\footnote{This is a challenging scenario for traditional wireless networks without STAR-RISs. However, the communication requirements of users in the signal dead zone can be guaranteed with the aid of a STAR-RIS.} Then, the effective channel power gain of user $i$ on sub-channel $k$ is given by

\begin{equation}
	{\left| {{h_{k,i}}} \right|^2} = {\left| {{\mathbf{f}}_{k,i}^H{{\mathbf{\Theta}_n}}{\mathbf{g}_k}} \right|^2}=|\mathbf{v}_n^H{\mathbf{q}_{k,i}}|^2 ,\\
\end{equation}
where $\mathbf{g}_k\in\mathbb{C}^{M\times1}$ and $\mathbf{f}_{k,i}\in\mathbb{C}^{M\times1}$ denote the narrow-band quasi-static Rician fading channels from the AP to the STAR-RIS and from the STAR-RIS to user $i$, respectively. ${{\mathbf{q}}_{k,i}} = {\rm{diag}} ({\mathbf{f}}_{k,i}^H)\mathbf{g}_k$ denotes the cascaded channel vector.

\emph{1) OMA:} For OMA, a pair of users $\{i,\bar{i}\}\in\mathcal{I}_k$ on sub-channel $k$ are served by time division multiple access (TDMA) to avoid co-channel interference. Hence, the achievable communication rate of user $i$ can be expressed as
\begin{align}\label{rate3}
	R_{k,i}^O = \omega_{k,i}{\log_2}\left({1 + \frac{{{p_{k,i}}{{\left|h_{k,i} \right|}^2}}}{\omega_{k,i}{\sigma_k^2}}} \right),
\end{align}
where $p_{k,i}$ denotes the allocated transmit power of user $i$, $\sigma_k^2$ is the variance of noise, $\omega_{k,i}$ denotes time allocation factor, which satisfies $\omega_{k,i}+\omega_{k,\bar{i}}=1,\forall{k\in\mathcal{K}}$.
\begin{remark}
	\rm  We consider the general OMA case in (\ref{rate3}), where the time resource can be allocated adaptively. In conventional OMA schemes, time resources are equally allocated over each sub-channel, i.e. $\omega_{k,i}=1/2$, which serves as a special case of our considered OMA case.
\end{remark}

\emph{2) NOMA:} For power-domain NOMA, the paired users are served at the same time and frequency resource block. The superposition symbol $x_k$ to be transmitted on the $k$-th
sub-channel is
\begin{align}
	{x_k} = \sum_{i=1}^{I}\lambda_{k,i}\sqrt{p_{k,i}}s_{k,i},
\end{align}
where $s_{k,i}$ denotes the desired symbol of user $i$. 

 Then, the received signal of user $i$ on sub-channel $k$ is
\begin{align}
	y_{k,i}= |h_{k,i}|^2x_k+ z_{k,i},
\end{align}
where $z_{k,i}$ indicates the additive white Gaussian noise with zero mean and variance $\sigma_k^2$, i.e., $z_{k,i} \sim {\mathcal{C}\mathcal{N}}\left( {0,{\sigma_k^2}} \right)$.

The SIC decoding order is an essential issue for NOMA systems, where the optimal decoding order is determined by the channel gains. However, in STAR-RIS-NOMA systems, the combined channel gains can be modified by tuning the coefficients of the STAR-RIS. Therefore, the optimal decoding order will be between two different orders for each sub-channel. Denote $\pi_k(i)\in\{0,1\}$ as the decoding order of user $i$ on sub-channel $k$. For instance, for a pair of users $\{i,\bar{i}\}$ that is assigned to sub-channel $k$ with the decoding order $\pi_k(i)=1,\pi_k(\bar{i})=0$, the signal of user $\bar{i}$ is first decoded by treating user $i$'s signal as interference. Then, by removing user $\bar{i}$'s signal with
SIC techniques, user $i$ can decode its signal without the co-channel interference. In this case, $|h_{k,i}|^2\geq|h_{k,\bar{i}}|^2$ should be satisfied to guarantee that the SIC can be performed successfully. Otherwise, for the decoding order $\pi_k(i)=0,\pi_k(\bar{i})=1$, $|h_{k,i}|^2\leq|h_{k,\bar{i}}|^2$ should be satisfied.

Therefore, for any user $i$ in sub-channel $k$, the achievable communication rate can be expressed as (bit/s/Hz):
\begin{align}\label{rate1}
	R_{k,i}^N = {\log_2}\left({1 + \frac{{{p_{k,i}}{{\left|h_{k,i} \right|}^2}}}{\pi_k(\bar{i}){{p_{k,\bar{i}}}{{\left| h_{k,i} \right|}^2} +{\sigma_k^2}}}} \right),
\end{align}
where $\bar{i}$ is the index of the other user that is paired with user $i$.

\subsection{Problem Formulation}
\emph{1) Problem for OMA:} For OMA, we aim to maximize the system sum-rate by jointly optimizing channel assignment, transmit power, time allocation factor, and transmission- and reflection-coefficient vectors, subject to the rate requirements. Hence, the optimization problem can be formulated as follows:

\begin{subequations}\label{pro_oma}
	\begin{align}
		\mathop {\max}\limits_{\left\{\boldsymbol{\lambda},\mathbf{p} ,\boldsymbol{\omega},\mathbf{v}_n\right\}} &\;\;\sum_{k=1}^{K}\sum_{i=1}^{I}\lambda_{k,i}R_{k,i}^O  \\
		\label{QoS}{\rm{s.t.}}\;\;&R_{k,i} \ge {\gamma},\forall k \in {\mathcal{K}},\forall{i\in\mathcal{I}},\\
		\label{energy}&\beta^t_m+\beta^r_m=1,\forall m \in {\mathcal{M}},\\
		\label{unitmo}&|\theta_m^n|=1,\forall m \in {\mathcal{M}},n \in \{t,r\},\\
		\label{positvepower}&p_{k,i}\geq{0}, \forall k \in {\mathcal{K}},\forall{i\in\mathcal{I}},\\
		\label{totalpower}&\sum_{k=1}^K\sum_{i=1}^I{p_{k,i}}\leq{P_{\text{max}}}, \\
		\label{binary}&\lambda_{k,i}\in\{0,1\},\forall k,i,\\
		%\label{max_c}&|\mathcal{I}_k|\leq{N_{\text{max}}}, \forall k,  \\
		\label{max_c}&1\leq\sum_{i=1}^I\lambda_{k,i}\leq{2},\forall k\in\mathcal{K},\\
		\label{band}&\omega_{k,i}+\omega_{k,\bar{i}}=1,\forall{k\in\mathcal{K}},
	\end{align}
\end{subequations}
where $\boldsymbol{\lambda}=\{\lambda_{1,1},..,\lambda_{K,I}\}$ is the channel assignment vector, $\mathbf{p}=\{p_{1,1},...,p_{K,I}\}$ is the power allcoation vector, and $\boldsymbol{\omega}=\{\omega_{1,1},...,\omega_{K,I}\}$ denotes the time allocation vector.
Constraint (\ref{QoS}) guarantees the communication rate of each user should be no lower than $\gamma$. 
Constraint (\ref{energy}) represents the law of energy conservation. (\ref{unitmo}) is the unimodular phase shift constraint for each element of the STAR-RIS. Constraint (\ref{positvepower}) and (\ref{totalpower}) ensures that the allocated power for each user is positive and the total budget is $P_{\text{max}}$, respectively. Constraints (\ref{binary}) and (\ref{max_c}) indicate that the channel assignment variable is binary and the maximum number of users that are multiplexed on each subchannel is two, respectively.  (\ref{band}) is the time allocation factor constraint.

\emph{2) Problem for NOMA:} For NOMA, the joint optimization problem of channel assignment, decoding order, transmission- and reflection-coefficient vectors, and power allocation can be formulated as

\begin{subequations}\label{problem}
\begin{align}
\mathop {\max}\limits_{\left\{\boldsymbol{\lambda},\boldsymbol{\pi},\mathbf{v}_n,\mathbf{p}\right\}} &\;\;\sum_{k=1}^{K}\sum_{i=1}^{I}\lambda_{k,i}R_{k,i}^N  \\
{\rm{s.t.}}\;\;
\label{sic}&\left\{\begin{gathered}
	\text{if}\ \pi_k(i)=1,\pi_k(\bar{i})=0,|h_{k,i}|^2\geq|h_{k,\bar{i}}|^2, \\
	 \text{if}\ \pi_k(i)=0,\pi_k(\bar{i})=1,|h_{k,i}|^2\leq|h_{k,\bar{i}}|^2.
\end{gathered}\right.\\
&\rm(\ref{QoS})-(\ref{max_c}).
\end{align}
\end{subequations}
where $\boldsymbol{\pi}=\{\pi_1(1),...,\pi_K(I)\}$ is the decoding order vector. Constraint (\ref{sic}) ensures the SIC decoding can be conducted successfully.

\begin{theorem}
	Problem (\ref{pro_oma}) and (\ref{problem}) are non-deterministic polynomial- (NP-) hard even when only the channel assignment is considered.
\end{theorem}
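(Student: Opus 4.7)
The plan is to establish NP-hardness through a polynomial-time reduction from the three-dimensional matching problem (3DM), one of Karp's classical NP-complete problems. I would proceed in three stages, first isolating a purely combinatorial core of both (\ref{pro_oma}) and (\ref{problem}), then mapping 3DM into that core, and finally verifying that the embedding respects the physical constraints of the STAR-RIS.

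First, I would freeze every continuous variable at an arbitrary feasible value: the power allocation $\mathbf{p}$ and the beamforming vectors $\mathbf{v}_n$ in both problems, the time-allocation factors $\boldsymbol{\omega}$ for OMA, and the decoding order $\boldsymbol{\pi}$ for NOMA. With these variables fixed, each candidate pairing of users $\{i,\bar{i}\}$ on sub-channel $k$ yields a well-defined non-negative weight $w_{k,i,\bar{i}}$ read off from (\ref{rate3}) or (\ref{rate1}). What remains is the purely discrete task of partitioning the $I=2K$ users into $K$ unordered pairs and matching those pairs bijectively to the $K$ sub-channels so as to maximize $\sum_{k=1}^{K} w_{k,i_k,\bar{i}_k}$. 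I will call this the channel-assignment subproblem and show it alone is NP-hard.

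Second, I would reduce 3DM to the channel-assignment subproblem. Given a 3DM instance with disjoint sets $X,Y,Z$ of size $n$ and a collection $T\subseteq X\times Y\times Z$, I would construct an instance with $K=n$ sub-channels and $I=2n$ users, half labeled by $X$ and half by $Y$, and then set $w_{k,i,\bar{i}}=1$ whenever $i$ corresponds to some $x\in X$, $\bar{i}$ to some $y\in Y$, and $(x,y,z_k)\in T$, and $w_{k,i,\bar{i}}=0$ otherwise. Because all $K$ channels and all $I$ users must be used, the subproblem optimum equals $n$ if and only if the 3DM instance admits a perfect matching; pairs that are same-type or mixed-but-not-in-$T$ contribute zero, so they cannot help beat this bound. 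Since the reduction is linear in the instance size, polynomial-time solvability of the subproblem would imply polynomial-time solvability of 3DM.

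The step I expect to be the main obstacle is the realization argument, namely showing that the tightly structured rate expressions (\ref{rate3}) and (\ref{rate1}) can realize the required ordinal weight pattern while respecting the STAR-RIS constraints (\ref{energy})--(\ref{unitmo}) and the power budget (\ref{totalpower}). I would resolve this by exploiting the fact that, once $\mathbf{v}_n$ and $\mathbf{p}$ are frozen, the effective SNR $p_{k,i}|\mathbf{v}_n^H\mathbf{q}_{k,i}|^2/\sigma_k^2$ can be tuned almost independently by choosing the free instance parameters $\mathbf{q}_{k,i}$ and $\sigma_k^2$; only two distinct weight levels are needed, and any strictly order-preserving encoding suffices since the objective is additive across sub-channels. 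The QoS constraint (\ref{QoS}) is handled by picking the reference parameters so that even the smaller weight level exceeds $\gamma$, ensuring the reduction preserves feasibility as well as optimality. NP-hardness of the full joint problems then follows immediately, since they strictly contain the channel-assignment subproblem as a special case.
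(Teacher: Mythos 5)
Your overall strategy --- reduce from three-dimensional matching --- is the same one the paper uses, and your second stage (labelling half the users by $X$, half by $Y$, and the sub-channels by $Z$, so that a value of $n$ is attainable iff the 3DM instance has a perfect matching) is a legitimate and more explicit version of the paper's argument, which merely identifies the tuple structure $(k,i,j)\in\mathcal{K}\times\mathcal{I}_t\times\mathcal{I}_r$ with a 3DM instance and cites prior work. However, your first stage introduces a gap that the paper avoids by construction. You freeze \emph{all} continuous variables and then claim each pair $\{i,\bar i\}$ on sub-channel $k$ has "a well-defined weight $w_{k,i,\bar i}$" that you may prescribe freely on triples. For OMA this is false: with $\mathbf{p}$, $\boldsymbol{\omega}$ and $\mathbf{v}_n$ fixed, the rate in (\ref{rate3}) of user $i$ on channel $k$ depends only on the pair $(k,i)$ and not on the partner, so the objective of (\ref{pro_oma}) collapses to $\sum_{k,i}\lambda_{k,i}c_{k,i}$ with constants $c_{k,i}$. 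Under (\ref{binary})--(\ref{max_c}) this is a maximum-weight bipartite $b$-matching (each channel has capacity two), which is solvable in polynomial time. A separable weight of the form $c_{k,i}+c_{k,\bar i}$ cannot encode your $0/1$ pattern on triples $(x,y,z_k)\in T$, so the reduction as written does not go through for the OMA case. The non-separability you need must come either from the coupling constraints you froze away ($\omega_{k,i}+\omega_{k,\bar i}=1$, the shared power budget, the shared $\mathbf{v}_n$) or, for NOMA, from the interference term in (\ref{rate1}); in either case the weight is the value of an inner optimization, and your "realization argument" then has to show that \emph{that} value function can reproduce an arbitrary two-level pattern on triples, which is considerably harder than tuning independent SNRs and is not established by your sketch.

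Even for NOMA, where the interference term does make the pair rate non-separable, the achievable weights have a restricted functional form, $\log_2\bigl(1+p_{k,i}|h_{k,i}|^2/\sigma_k^2\bigr)+\log_2\bigl(1+p_{k,\bar i}|h_{k,\bar i}|^2/(p_{k,i}|h_{k,\bar i}|^2+\sigma_k^2)\bigr)$, parameterized only by the two individual cascaded gains and powers, and the admissible decoding orders are themselves constrained by (\ref{sic}); you would need to exhibit explicit gains and powers realizing the required pattern rather than assert that "any strictly order-preserving encoding suffices." The paper sidesteps this by restricting to the special case in which a T user must be paired with an R user, so that the hardness claim is about the combinatorial structure $\mathcal{K}\times\mathcal{I}_t\times\mathcal{I}_r$ itself (with the weight-realization details delegated to the cited references). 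To repair your proof you should either adopt that restriction and invoke the known hardness of the axial three-index assignment problem with explicitly realizable weights, or carry out the realization construction for the NOMA rate function in full; the current freezing step actively breaks the OMA half of the theorem.
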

\begin{proof}\renewcommand{\qedsymbol}{}
	\emph{See Appendix A.}
\end{proof}

The main challenges to solve problem (\ref{pro_oma}) and (\ref{problem}) are as follows: Firstly, due to the co-existence of integer and continuous variables, as well as the non-linear objective function and constraints, the formulated sum-rate maximization problem is an MINLP problem, which is non-trivial using traditional optimization methods. Even if only the channel assignment problem is considered, it is proved to be NP-hard. Secondly, the optimization variables of transmit power and beamforming are coupled, which makes the problem non-convex. Finally, for NOMA problem, since the STAR-RIS can control the combined channel, it is difficult to obtain the optimal decoding order.
\subsection{Algorithm Overview}

%\begin{algorithm}[t]
%	\caption{Alternative Algorithm for multi-carrier STAR-RIS aided NOMA networks}
%	\label{alg1}
%	\begin{algorithmic}[1]
%		\STATE \textbf{Initialize} Randomly initialize the channel assignment scheme $\boldsymbol{\lambda}$. $l=0$. 
%		\STATE \textbf{repeat}
%		\STATE \qquad Jointly optimizing beamforming matrix, decoding order, and power allocation via \textbf{Algorithm 2}. 
%		
%		\STATE \qquad Optimize the channel assignment using \textbf{Algorithm 3} or LSA.
%		\STATE \qquad $l=l+1$;
%		\STATE \textbf{until} The objective function value converges or $l=l^{\text{max}}$.
%	\end{algorithmic}
%\end{algorithm}
To overcome the challenges of the formulated problem (\ref{pro_oma}) and (\ref{problem}), we propose efficient algorithms for OMA and NOMA, %given in sections IV and V, 
respectively. For an overview, the roadmap of proposed algorithms is summarized in Fig. 2. 
\begin{figure}[t!]
	\centering
	\includegraphics[width=1\textwidth]{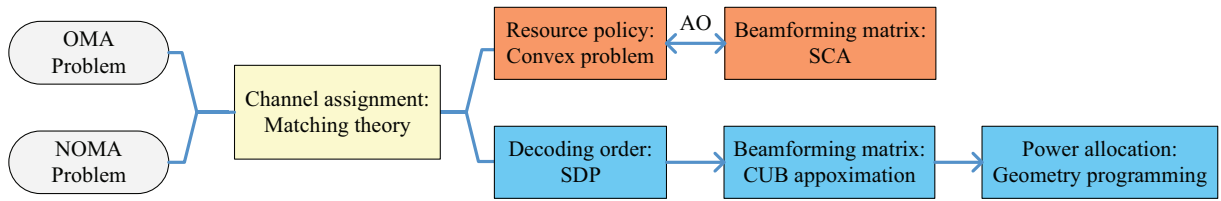}\\
	\caption{Roadmap of proposed algorithms for solving problem OMA and NOMA.}\label{roadmap}
\end{figure}
First, we propose efficient algorithms utilizing matching theory to deal with the NP-hard sub-channel assignment problem. Particularly, for OMA, we consider a general design, while for NOMA, we consider transmission-reflection user-pairing to reduce the complexity. % which is summarized in \textbf{Algorithm 1}. 
After channel assignment, an alternating optimization-based algorithm is proposed to solve the resulting OMA problem. %, which is presented in \textbf{Algorithm 2}. 
Specifically, the first subproblem of power and time allocation is proved to be convex. Then, the second subproblem of transmission- and reflection-coefficient vector design is solved using SCA. Finally, the two subproblems are optimized iteratively until convergence. For NOMA, a three-step approach is proposed after channel assignment. % which is summarized in  $\textbf{Algorithm 4}$.
Specifically, the decoding orders are first designed by maximizing the overall channel gains using semidefinite programming (SDP). Next, the transmission- and reflection-coefficient vector is optimized using CUB approximation. % which is summarized in $\textbf{Algorithm 3}$.
Finally, the power allocation sub-problem is transformed to a GP, which is convex.

\section{Proposed Solutions For OMA}
In this section, we first utilize matching theory to design the channel assignment scheme. Then, an AO-based algorithm is proposed to solve the resulting OMA problem. 

\subsection{Channel Assignment Scheme for OMA}
In this subsection, we focus on the channel assignment strategy, whose problem can be formulated as follows:
\begin{subequations}\label{p8}
	\begin{align}
		\mathop {\max}\limits_{\left\{\boldsymbol{\lambda} \right\}} &\;\;\sum_{k=1}^{K}\sum_{i=1}^{I}\lambda_{k,i}R_{k,i}^O  \\
		{\rm{s.t.}}\;\;&\rm(\ref{binary}),(\ref{max_c}).
	\end{align}
\end{subequations}

%\subsection{Exploration and Swap Based Algorithm} 
To obtain the globally optimal solution of problem (\ref{p8}), a straightforward method is to exhaustively search over all the combinations between sub-channels and users, which is computationally unacceptable. To solve this difficulty, we reformulate the problem as a matching game between sub-channels and users and solve it by a
low-complexity matching theory-based algorithm. %Despite the different expressions of achievable data rates, the matching theory is shown to be applicable to both OMA and NOMA. 
To describe our proposed algorithm well, we first introduce some basic concepts of the matching theory.
%combination

\emph{Definition 1 (Two-sided Matching):} Consider two disjoint sets, the subchannel set $\mathcal{K}=\{1,...,K\}$ and the user set $\mathcal{I}=\{1,...,I\}$. A many-to-one (two-to-one) matching $\Phi$ is defined as a function from $\mathcal{K}\cup\mathcal{I}$ to all subsets of $\mathcal{K}\cup\mathcal{I}$ such that
\begin{equation*}
	\begin{aligned}
		&1) \ \ \Phi(k)\subseteq\mathcal{I},\Phi(i)\in\mathcal{K};\\
		&2) \ \ \left|\Phi(k)\right|={2},\left|\Phi(i)\right|=1;\\
		&3) \ \ k=\Phi(i)\Leftrightarrow{i\in\Phi(k)}.
	\end{aligned}
\end{equation*}

Condition 1) implies that each subchannel is matched with a subset of users and each user is matched with one of the subcarriers. Condition 2) states that each user can occupy only one subchannel and each subcarrier can be assigned to two users. Condition 3) implies that channels and users are matched with each other. %Note that when $M_{\text{max}}=1$, it becomes a definition of many-to-one matching.

\emph{Definition 2 (Utility Function):} Given a matching state $\Phi$, the utility functions of a user $i$ and a subchannel $k$ are defined as: $U_i(\Phi)=R_{k,i}$, $U_k(\Phi)=\sum_{i\in\Phi(k)}R_{k,i}$.

\emph{Definition 3 (Swap Matching):} For users $i,\widetilde{i}$ and subchannels $k,\widetilde{k}$ with matching state $\Phi(i)=k,\Phi(\widetilde{i})=\widetilde{k}$, a swap matching is
\begin{equation*}
	\Phi_i^{\widetilde{i}}=\left\lbrace \Phi\setminus\left\{(i,k),(\widetilde{i},\widetilde{k})\right\}\cup\left\{(i,\widetilde{k}),(\widetilde{i},k)\right\}\right\rbrace 
\end{equation*}

One swap matching process enables two users to switch their assigned subchannels. Based on swap matching, we define a swap-blocking pair.

\emph{Definition 4 (Swap-blocking pair):} Given a matching state $\Phi$ with $\Phi(i)=k,\Phi(\widetilde{i})=\widetilde{k}$, a pair of users $(i,\widetilde{i})$ is a swap-blocking pair if and only if
\begin{equation*}
	\begin{aligned}
		&1)\ \ \forall{x}\in\{i,\widetilde{i},k,\widetilde{k}\},U_x(\Phi_i^{\widetilde{i}})\geq{U_x(\Phi)}\\
		&2) \ \ \exists{x}\in\{i,\widetilde{i},k,\widetilde{k}\},U_x(\Phi_i^{\widetilde{i}})>{U_x(\Phi)}
	\end{aligned}
\end{equation*}
where $x$ denotes any of the players involved (either user or sub-channel). Based on the above definitions, the proposal of two users to switch their channels will be approved only if the following conditions are met: Condition 1) indicates that the utility of any of the players should not be reduced after a swap operation. Condition 2) implies that at least one of the players' rates will increase.

%Motivated to address this issue, we take `exploration' into account and develop a novel exploration and swap-based algorithm (ESA) to approach the optimum solution. More specifically, in ESA, we can initialize the matching state $\Phi$ randomly. Then, in the swapping phase, we keep searching for two users $(i,\widetilde{i})$ to swap their channels if the following conditions are satisfied: 1) If $(i,\widetilde{i})$ is a swap-blocking pair, the swap will be executed immediately. 2) If it is not, $\Phi_i^{\widetilde{i}}$ will be executed with probability
%\begin{equation} \label{prob}
	%	p_r=\frac{1}{1+e^{-s\frac{U_{\text{total}}(\Phi_i^{\widetilde{i}})-U_{\text{total}}(\Phi)}{U_{\text{total}}}}},
	%\end{equation}
%when $U_{\text{total}}(\Phi_i^{\widetilde{i}}){>}U_{\text{total}}(\Phi)$, where $U_{\text{total}}$ denotes the total utility of involved players under matching $\Phi$, $s$ is a constant parameter. The above swapping rule facilitates AP to find potential user pairs to swap their channels each if they are not swap-blocking. As a result, the system sum-rate may benefit from this exploration. Moreover, the algorithm no longer depends on the initialization scheme. The detail of proposed ESA is summarized in \textbf{Algorithm} 3.
%\subsection{Low-complexity Location-based Swapping Algorithm}
%The ESA may bring near-optimal behavior as the exhaustive search-based algorithm but is of high complexity as there are more users and subchannels. 

Next, we introduce our channel assignment design for OMA. In the initialization phase, we assume that the time allocation factor $\omega_{k,i}$ for each user is 0.5 and the amplitude adjustment of each element $\beta_m^t=\beta_m^r=0.5,\forall{m\in\mathcal{M}}$. Then, each side of the STAR-RIS can be regarded as an independent conventional RIS and the phase-shifts of each element can be initialized by the traditional method proposed in\cite{ofdm}. Then, all users calculate their channel gains and propose to join the channel that provides the highest gain. At the AP, two users' proposals are approved on each sub-channel and the matching state is initialized as $\Phi$.

We can use matching theory to further improve the channel gain of each user. Based on our definitions, we can formulate a \emph{swap operation-enabled algorithm} as many previous works have considered\cite{nomaris3,power2}, which is briefly described as follows: The AP keeps searching for two users that form a swap-blocking pair and execute the swap operation. The process continues until the final matching state $\Phi'$ does not contain any swap-blocking pair. 

%Note that the above process requires solving a SDP problem\cite{ofdm} and executing swap operation among all users, which is computationally costly. In the next section, we will propose a low-complexity channel assignment scheme for NOMA, which is also shown to be applicable to OMA.

\subsection{Proposed AO-Based Algorithm}
In this subsection, we focus on the joint optimization of transmit power, time allocation factor, and transmission- and reflection-coefficient vectors to solve the resulting OMA problem after channel assignment. To deal with the highly coupled variables, we propose an alternating optimization algorithm, which iteratively optimizes two groups of variables, i.e., $\{\mathbf{p},\boldsymbol{\omega}\}$ and $\{\mathbf{v}_n\}$, respectively.

\emph{Optimization with fixed beamforming vectors:} First, we optimize the transmit power $\mathbf{p}$ and time block $\boldsymbol{\omega}$ allocation with fixed beamforming vector $\mathbf{v}_n$. The subproblem is written as:
\begin{subequations}\label{pro_oma2}
	\begin{align}
		\mathop {\max}\limits_{\left\{\mathbf{p} ,\boldsymbol{\omega}\right\}} &\;\;\sum_{k=1}^{K}\sum_{i\in\mathcal{I}_k}R_{k,i}^O  \\
		{\rm{s.t.}}\;\;&\rm(\ref{QoS}),(\ref{positvepower}),(\ref{totalpower}),(\ref{band}).
	\end{align}
\end{subequations}

It can be verified that problem (\ref{pro_oma2}) is a standard convex optimization problem since the function $f(x,y)=x\log_2(y/x)$ is jointly concave with respect to $x$ and $y$. As such, (\ref{pro_oma2}) can be efficiently solved via existing solvers, e.g., CVX\cite{cvx}.

\begin{remark}
	\rm For conventional OMA transmission scheme with $\omega_{k,i}=0.5$, the power allocation scheme can be given by the well-known water-filling solution, which can be expressed as close-formed\cite{wirelesscom}.
\end{remark}

\emph{Optimization with fixed power and time allocation:} Next, with fixed power and time allocation, the subproblem of optimizing transmission- and reflection- coefficient vector is formulated as
\begin{subequations}\label{pro_oma_matrix}
	\begin{align}
		\mathop {\max}\limits_{\left\{\mathbf{v}_n\right\}} &\;\;\sum_{k=1}^{K}\sum_{i\in\mathcal{I}_k}R_{k,i}^O  \\
		{\rm{s.t.}}\;\;&\rm(\ref{QoS}),(\ref{energy}),(\ref{unitmo}).
	\end{align}
\end{subequations}

Since the objective function of problem (\ref{pro_oma_matrix}) is non-concave, we first introduce a set of auxiliary variables $a_{k,i},b_{k,i},\chi_{k,i}$, and the problem is equivalently transformed to  
\begin{subequations}
	\begin{align}
		\mathop {\max}\limits_{\left\{a_{k,i},b_{k,i},\chi_{k,i},\mathbf{v}_n\right\}}& \omega_{k,i}{\log_2}\left({1 + \frac{{{P_{k,i}}{{\chi_{k,i}}}}}{\omega_{k,i}{\sigma_k^2}}} \right)\\	
		\label{con_real}{\rm{s.t.}}\;\;&a_{k,i}=\text{real}(\mathbf{v}_n^H{\mathbf{q}_{k,i}}),	\\
		\label{con_imag}&b_{k,i}=\text{imag}(\mathbf{v}_n^H{\mathbf{q}_{k,i}}),\\
		\label{sca_con}&\chi_{k,i}^2\leq{a_{k,i}^2+b_{k,i}^2},\\
		\label{time}&\sum_{i\in\mathcal{I}_k}\omega_{k,i}=1,\forall{k\in\mathcal{K}}.
	\end{align}
\end{subequations}

Since constraint (\ref{sca_con}) is still non-convex, we invoke successive convex approximation (SCA) technique to deal with it. Specifically, we use first-order Taylor expansion as an upper bound to represent the right-hand-side of (\ref{sca_con}), i.e.,
\begin{align}\label{sca_con2}
	\chi_{k,i}\leq\bar{a}_{k,i}^2+\bar{b}_{k,i}^2+2\bar{a}_{k,i}(a_{k,i}-\bar{a}_{k,i}^2)+2\bar{b}_{k,i}(b_{k,i}-\bar{b}_{k,i}^2),
\end{align}
where $\bar{a}_{k,i}$ and $\bar{b}_{k,i}$ are stationary points that can be updated by the solutions of the last iteration. 

Finally, we consider the following optimization problem
\begin{subequations}\label{pro_oma3}
	\begin{align}
		\mathop {\max}\limits_{\left\{a_{k,i},b_{k,i},\chi_{k,i},\mathbf{v}_n\right\}}& \omega_{k,i}{\log_2}\left({1 + \frac{{{P_{k,i}}{{\chi_{k,i}}}}}{\omega_{k,i}{\sigma_k^2}}} \right)\\	
		{\rm{s.t.}}\;\;&\rm (\ref{con_real}),(\ref{con_imag}),(\ref{time}),(\ref{sca_con2}).
	\end{align}
\end{subequations}
which is a convex optimization problem. Thus, it can be solved efficiently via standard solvers like CVX\cite{cvx}. The procedure of AO-based method for solving OMA problem is given in \textbf{Algorithm \ref{alg_oma_ao}}.

\begin{algorithm}[t]
	\caption{Alternating Optimization Algorithm for Solving OMA Problem}
	\label{alg_oma_ao}
	\begin{algorithmic}[1]
		\STATE \textbf{Initialize} Initialize the channel assignment $\lambda_{k,i}$, tolerance $\epsilon$. Set $t=0$. 
		\STATE \textbf{repeat:} 
		\STATE \qquad $t=t+1$;
		\STATE \qquad Solve problem (\ref{pro_oma2}) to obtain the optimal power and time allocation $p^t_{k,i},\omega_{k,i}^t$.
		\STATE \qquad Optimize the transmission and reflection beamforming vector by solving (\ref{pro_oma3}).
		\STATE \textbf{until} The increase of the objective function is within $\epsilon$.
		\STATE \textbf{Output} Converged solution $\mathbf{p}^*,\boldsymbol{\omega}^*,\mathbf{v}_n^*$.
	\end{algorithmic}
\end{algorithm}

\subsection{Stability, Convergence and Complexity Analysis}
\begin{proposition} \label{proposition2}
	Following the swap operation-enabled algorithm, one match will be two-sided exchange stable but may not converge to a globally optimal solution. 
	
\end{proposition}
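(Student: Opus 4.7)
The claim has two parts: the limiting match $\Phi'$ is two-sided exchange stable, and it may fail to be globally optimal. My plan is to derive stability directly from the termination rule, with termination itself established via a monotone potential argument, and to derive sub-optimality from the NP-hardness of problem (\ref{p8}) established in Theorem 1.

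For stability, the statement ``$\Phi'$ contains no swap-blocking pair'' is, by Definition 4, the literal termination criterion of the swap operation-enabled algorithm, so stability reduces to showing that the algorithm terminates. I would use the system sum-rate $S(\Phi)\triangleq\sum_k U_k(\Phi)$ as the potential. Because a swap involving users $(i,\widetilde{i})$ and sub-channels $(k,\widetilde{k})$ leaves all other sub-channels untouched, the increment $\Delta S = S(\Phi_i^{\widetilde{i}})-S(\Phi)$ decomposes as $\Delta U_k+\Delta U_{\widetilde{k}}$, which is non-negative by Condition 1) of Definition 4. When the strict improvement of Condition 2) is located at $U_k$ or $U_{\widetilde{k}}$, $S$ strictly increases. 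In the residual case where the strict improvement is located only at $U_i$ or $U_{\widetilde{i}}$, I would break ties with a lexicographic secondary potential (e.g., augment $S$ with the vector of user utilities and compare lexicographically); since the feasible assignment set is finite, this lexicographically monotone sequence halts in finitely many steps, and the output $\Phi'$ is exchange-stable by construction.

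For the sub-optimality part, the cleanest argument is complexity-theoretic. By Theorem 1, problem (\ref{p8}) is NP-hard. Each accepted swap requires only polynomial work, and the monotone-potential bound above limits the total number of swaps by a polynomial in $I$ and $K$, since the sum-rate takes only finitely many values across the feasible assignment set. Were the algorithm guaranteed to return the global optimum on every instance, one would obtain a polynomial-time exact algorithm for an NP-hard problem, contradicting $\mathrm{P}\neq\mathrm{NP}$; hence some instance must exhibit a strictly sub-optimal output. If a constructive witness is preferred, one can exhibit a small toy instance whose unique sum-rate-maximising assignment is reachable only via a simultaneous three-way (cyclic) reassignment, a move not available to the pairwise swap operator.

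The delicate step is the corner case in the monotonicity argument: when the strict inequality of Condition 2) is realised only at $U_i$ or $U_{\widetilde{i}}$, the na\"ive sum-rate potential need not strictly increase, so one must work harder to rule out cycling. I expect to resolve this through the lexicographic refinement above, or equivalently by noting that at each swap the utility tuple indexed by the involved players is weakly dominating and strictly so at some coordinate, which prevents recurrence of any previously visited matching. Once that subtlety is dispatched, the remainder of the proposition combines Definition 4 with the NP-hardness of Theorem 1 in an essentially routine fashion.
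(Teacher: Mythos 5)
Your treatment of stability and termination is sound and in fact tighter than the paper's. The paper simply asserts that ``all players benefit from each swap operation'' and concludes finite convergence; you make this precise with the potential $S(\Phi)=\sum_{k} U_k(\Phi)$. Note, though, that the corner case you worry about is vacuous here: at the channel-assignment stage the rates $R_{k,i}$ are computed with fixed time allocation and fixed beamforming, so a swap leaves uninvolved users' rates unchanged and $\Delta U_k+\Delta U_{\widetilde{k}}=\Delta U_i+\Delta U_{\widetilde{i}}$; Condition 2) of Definition 4 therefore forces $S$ to increase \emph{strictly} at every approved swap, and the lexicographic refinement, while harmless, is unnecessary. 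Exchange stability of the limit then follows, as you say, because ``no swap-blocking pair'' is the literal stopping rule.

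The genuine gap is in your primary argument for sub-optimality. The step ``the sum-rate takes only finitely many values across the feasible assignment set, hence the number of swaps is polynomial in $I$ and $K$'' is a non sequitur: the number of feasible matchings, and hence of distinct potential values, is exponential in $I$, so the monotone-potential argument yields only \emph{finite} termination, not a polynomial running time. Local-search procedures of exactly this type can require exponentially many improving moves in the worst case, so without a separate polynomial bound the reduction ``always exact $\Rightarrow$ polynomial-time algorithm for an NP-hard problem'' does not go through; and even if it did, it would make the proposition conditional on $\mathrm{P}\neq\mathrm{NP}$, whereas the claim is unconditional. The paper's proof avoids all of this with a direct mechanism: take an instance in which the sum-rate-optimal matching differs from the current one by a single pairwise swap that increases one involved user's rate but strictly decreases the other's; by Condition 1) of Definition 4 that pair is never swap-blocking, the move is never executed, and the algorithm halts at a two-sided exchange-stable but globally sub-optimal matching. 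Your fallback (a configuration whose optimum is reachable only via a three-way cyclic reassignment) is the same idea in a more elaborate form, but you leave it unconstructed; either way, it is the counterexample route, not the complexity-theoretic one, that actually establishes the second half of the statement.
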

\begin{proof}
	The swap operation enabled algorithm will keep searching until there is no swap-blocking pair. Thus, when it reaches the final state, none of the players' utility will increase without affecting others'. i.e., the final matching state is \emph{two-sided exchange stable}. Since all players benefit from each swap operation, the algorithm is guaranteed to converge after a finite number of steps. However, the algorithm may not reach the global optimum. For example, assume that the optimal matching is $\{(i,k),(\widetilde{i},\widetilde{k})\}$ and the current matching is $\{(i,\widetilde{k}),(\widetilde{i},k)\}$. If user $i$ proposes to swap its channel, its rate may increase a lot but user $\widetilde{i}$'s rate may decrease. Thus, $(i,\widetilde{i})$ can not form a swap-blocking pair and the swap operation may never be executed. In this case, the algorithm converges to a local optimum.
\end{proof}

From \textbf{Proposition} \ref{proposition2}, we find that the swap operation-enabled algorithm may not lead to global optimum since it is significantly affected by the initial matching state. Thus, it is essential to consider the initialization method as well as the swap-operation scheme carefully. Note that invoking AO, the objective values of problem (\ref{pro_oma2}) and (\ref{pro_oma3}) are non-decreasing during each iteration\cite{ofdm}. Moreover, since the system throughput is upper-bounded by a finite value, the proposed AO algorithm is guaranteed to converge.

Next, we analyze the computational complexity of the proposed algorithms. For swap operation-enabled algorithm for channel assignment, each active user searches for other users that are associated with different channels to perform potential swap operations. Then, it requires $\frac{I(I-1)}{2}$ steps of searching, leading to at most the same number of swap operations. %Denote the iteration numbers of the overall algorithm as $I_{\text{itr2}}$, the complexity is then calculated as $\mathcal{O}(I_\text{itr2}I^2)$. 
 The complexity of solving problem (\ref{pro_oma2}) is given by $\mathcal{O}(I^{3.5})$, while the complexity for solving problem (\ref{pro_oma3}) is given by $\mathcal{O}(M^{4.5}I^3)$. 
\section{Proposed Solutions for NOMA}
In this section, we first propose a low-complexity channel assignment sceme for NOMA. However, the resulting resource allocation problem is still challenging due to the binary decoding orders and the co-channel interference item in (\ref{rate1}) compared with OMA. To facilitate solving it, we propose an efficient three-stage algorithm to optimize the decoding order, power allocation and beamforming vectors step-by-step.
%1) LSA-1: In this scheme, we always pair a T user with an R user in the initialization phase and each user only searches for other users that belong to same region to form a swap-blocking pair. Specifically, a user $i\in\mathcal{I}_t$ searches for all users $j\in\mathcal{I}_t$ such that $\Phi(j)\neq\Phi(i)$, a user $\widetilde{i}\in\mathcal{I}_r$ searches for all users $\widetilde{j}\in\mathcal{I}_r$ such that $\Phi(\widetilde{j}j\neq\Phi(\widetilde{i})$
%
%
%2) LSA-2: In this scheme, we primarily pair users of the same side and execute swap operations among users that belong to the same region but transmit over different subchannels.
%%For example, a T user $i\in\mathcal{I}_t$ searches for all users $j\in\mathcal{I}_r$ such that $\Phi(j)\neq\Phi(i)$, and vice versa.
\subsection{Channel Assignment Scheme for NOMA}
The channel assignment problem for NOMA is given by

\begin{subequations}\label{channel_assignment_noma}
	\begin{align}
		\mathop {\max}\limits_{\left\{\boldsymbol{\lambda} \right\}} &\;\;\sum_{k=1}^{K}\sum_{i=1}^{I}\lambda_{k,i}R_{k,i}^N  \\
		{\rm{s.t.}}\;\;&\rm(\ref{binary}),(\ref{max_c}).
	\end{align}
\end{subequations}

Based on the definitions in section III, the formulated channel assignment problem for NOMA is a two-to-one matching with peer effects, also known as externalities\cite{matching1}. More specifically, due to interference item in (\ref{rate1}), the utility of each user depends on the subset of users that share the same subcarrier. Thus, an individual user cares about the allocated channel and other users sharing the same channel. Similarly, the utility of a subchannel is related to all users who have inner-relationship through multiplexing. Due to the peer effects, the utility of players keeps changing over the matching process, which makes the matching mechanisms complex to design. To deal with it, we will propose efficient initialization and swap operation scheme to achieve exchange stability.

Inspired by the unique topological characteristic of STAR-RIS that it divides the whole region into two parts, we propose a low-complexity location-based matching algorithm. LMA performs user-pairing and swap operations according to the regions that the users are located in. Moreover, due to the new DoFs of passive beamforming design that STAR-RIS brings, the channel disparities of users from different regions may be enlarged. It is intuitively reasonable to form a \emph{transmission-reflection user-pairing} and swap-matching scheme for NOMA. The detailed processes of the proposed LMA are as follows:

\textbf{1) Initialization Phase:} 
In LMA, we always pair a T user with an R user on a sub-channel in the initialization phase. Since the `stronger' users usually achieve higher gains from the resource allocation scheme, we can initialize the transmission- and reflection-coefficient vectors as
\begin{equation}\label{ini_phase}
	\beta_m^n=0.5,\theta_m^n=-\text{angle}(v_n^*),n\in\{t,r\},m\in\mathcal{M}
\end{equation}
where $v_n^*$ is the complex phase shift that leads to the maximum cascaded channel gain of a T user or an R user. Specifically, $v_n^*$ is the element of the optimal transmission or reflection coefficient vectors $\textbf{v}_n^*$, which are calculated as $\textbf{v}_n^*=\arg\underset{\textbf{v}_n}{\max}|{{\mathbf{v}}_n^H}{{\mathbf{q}}_{k,i}}|^2,n\in\{t,r\}$. In short, we calculate each user $i$'s potential maximum cascaded channel gain $\sum_{m=1}^M|q_{k,i}^m|^2$ and pick the highest one among T users and R users, respectively. Then we adjust the phase-shifts of the STAR-RIS to achieve the picked channel gain for transmission and reflection beamforming.
\begin{algorithm}[t]\label{lma}
	\caption{LMA for Channel Assignment}
	\label{alg1}
	\begin{algorithmic}[1]
		\STATE \textbf{Initialization Phase:} 
		\STATE \qquad Initialize the transmission and reflection coefficient vectors $\textbf{v}_n^*$ according to (\ref{ini_phase}). Set the power allocation $p_{k,i}=P_{\text{max}}/I$.
		\STATE \qquad All users calculate their equivalent channel gains $h_{k,i}=|{\mathbf{q}_{k,i}^H}\mathbf{v}_n^*|^2$.
		\STATE \textbf{repeat}
		\STATE \qquad Each user proposes to join the channel that provides the highest gain.
		\STATE \qquad Each channel accepts one proposal from T users and R users, respectively, and rejects others. 
		\STATE \textbf{until} There are no un-matched users.
		\STATE \textbf{Swapping Phase:} 
		\STATE \textbf{repeat}
		\STATE \qquad For any user $i\in\mathcal{T}$, it searches for all users $j\in\mathcal{T}$ such that $\Phi(j)\neq\Phi(i)$
		\STATE \qquad \textbf{if}  {$(i,j)$ is a swap-blocking pair }\textbf{then}	
		%\IF{blabla}
		\STATE \qquad\qquad Update $\Phi=\Phi_i^{j}$.
		\STATE \qquad \textbf{else} 
		%\ELSE 
		\STATE \qquad \qquad Keep the current matching state.
		\STATE \qquad \textbf{end if}
		%\ENDIF
		\STATE \textbf{until} All the users have been searched.
		\STATE \textbf{Output} Channel assignment vector $\boldsymbol{\lambda}^*$.
	\end{algorithmic}
\end{algorithm}
Next, based on the above transmission- and reflection-coefficient vectors, the users initialize their matching state as follows: Firstly, each user calculates channel gain according to (\ref{ini_phase}). Secondly, each users proposes to join the channel that provides the highest gain and has never rejected it. Thirdly, to ensure that a T user and an R user are paired, each channel only accepts one proposal from the T user and R user with the highest channel gain, respectively, and rejects others. Repeat the second and third steps until there are no unmatched users.

\textbf{2) Swapping Phase}: To further improve the effectiveness of the algorithm, swap operations are executed among users. In particular, each user only searches for other users that belong to the same region to form a swap-blocking pair, i.e. a user $i\in\mathcal{T}$ searches for all users $j\in\mathcal{T}$ such that $\Phi(j)\neq\Phi(i)$. %a user $\widetilde{i}\in\mathcal{I}_r$ searches for all users $\widetilde{j}\in\mathcal{I}_r$ such that $\Phi(\widetilde{j})\neq\Phi(\widetilde{i})$. 
If two users are swap-blocking, they swap their channels immediately. Repeat the above procedure until there are no swap-blocking pairs.

The details of the proposed LMA is summarized in \textbf{Algorithm 2}.
\begin{remark}
	\rm The proposed LMA can also be exploited to OMA transmission scheme, even though there are no peer effects. Since the swap operations between user $i,j$ will not affect the utilities of other users that are assigned to the same sub-channels, user $i,j$ can always swap their channels if they are swap-blocking. Thus, the final matching state will be two-sided exchange stable.
\end{remark}
\subsection{A Low Complexity Scheme for Decoding Order Determination}
After channel assignment, for the considered STAR-RIS-NOMA communication system, it is essential to decide the decoding order properly since the transmission and reflection coefficient may affect the optimal decoding order. For the whole system, the optimal decoding order will be any of the $2^K$ different orders and the computational complexity to solve the original problem will grow exponentially with more users, which is prohibitive. Therefore, we propose a low-complexity algorithm to determine the decoding orders by finding beamforming vectors that maximize the sum of combined channel gains. The formulated optimization problem is as follow:
\begin{subequations}\label{pro_max_channel_gain}
	\begin{align}
		\mathop {\max}\limits_{\left\{\mathbf{v}_n \right\}} &\;\;\sum_{k=1}^{K}\sum_{i\in\mathcal{I}_k}|h_{k,i}|^2\\
		\label{energy_con}{\rm{s.t.}}\;\;&\beta^t_m+\beta^r_m=1,\forall m \in {\mathcal{M}},\\
		\label{unitmo2}&|\theta_m^n|=1,\forall m \in {\mathcal{M}},n \in \{t,r\},
	\end{align}
\end{subequations}

In problem (\ref{pro_max_channel_gain}), the quadratic form ${\left| {{h_{k,i}}} \right|^2}=|\mathbf{v}_n^H{\mathbf{q}_{k,i}}|^2$ can be rewritten as $\mathbf{v}_n^H{\mathbf{Q}_{k,i}}\mathbf{v}_n$, where ${\mathbf{Q}_{k,i}}={\mathbf{q}_{k,i}}{\mathbf{q}_{k,i}^H}$. By introducing slack matrixs  $\mathbf{V}_n=\mathbf{v}_n\mathbf{v}_n^H,n\in\{t,r\}$, which are rank-one and positive semidefinite (PSD),
we have ${\left| {{h_{k,i}}} \right|^2}={\rm{Tr}}(\mathbf{Q}_{k,i}\mathbf{V}_n)$. Thus, problem (\ref{pro_max_channel_gain}) can be transformed to 
\begin{subequations}\label{pro_max_channel_gain2}
	\begin{align}
		\mathop {\max}\limits_{\left\{\mathbf{V}_n \right\}} &\;\;\sum_{k=1}^{K}\sum_{i\in\mathcal{I}_k}{\rm{Tr}}(\mathbf{Q}_{k,i}\mathbf{V}_n)\\
		\label{energy_con2}{\rm{s.t.}}\;\;&{\rm{Diag}}(\mathbf{V}_t)+{\rm{Diag}}(\mathbf{V}_r)=\textbf{1}^M,\\
		\label{psd}&\mathbf{V}_n\succeq0,n \in \{t,r\},\\
		\label{rank1_2}&{\rm{rank}}(\mathbf{V}_n)=1,n \in \{t,r\},.
	\end{align}
\end{subequations}

Note that the constraint (\ref{energy_con}) and (\ref{unitmo2}) are equivalently expressed as (\ref{energy_con2}), where $\textbf{1}^M$ is a length-$M$ vector with all the elements is one. Problem (\ref{pro_max_channel_gain2}) is a standard SDP problem, which can be solved by existing solvers in CVX\cite{cvx}. The only challenge lies in the rank-one constraint (\ref{rank1_2}). To deal with it, we can apply the Gaussian randomization to obtain a sub-optimal solution\cite{sdr}. 
The decoding orders are then determined by comparing the effective channel gains of the pairs ob each sub-channel. 

After the channel assignment and decoding orders are settled, in the following, we further denote $(k,(1))$ and $(k,(2))$ as the users that have decoding order 1 and 0 on sub-channel $k$, respectively. Then, with given decoding orders, constraint (\ref{sic}) for successful SIC decoding is equvalently transformed to 
\begin{equation}\label{sic2}
	{\rm{Tr}}(\mathbf{Q}_{k,(1)}\mathbf{V}_n)\geq{\rm{Tr}}(\mathbf{Q}_{k,(2)}\mathbf{V}_n),\ n\in\{t,r\}.
\end{equation}
which is a linear constraint.

%\begin{remark}
%\rm \emph{Maximum cascaded channel-based decoding order design:} When the number of users is small, we can simply decide the decoding orders by the cascaded channel gains. Specifically, denote $c_{k,i}=\max|{{\mathbf{v}}_k^H}{{\mathbf{q}}_{k,i}}|^2=\sum_{m=1}^M|q_{k,i}^m|^2=\Vert\mathbf{q}_{k,i}\Vert_1^2$ as the maximum cascaded channel gain of user $i$, where $q_{k,i}^m$ is the $m$-th element of $\mathbf{q}_{k,i}$. The decoding order is then decided by the ascending order of $c_{k,i}$. The cascaded channel-based algorithm does not require solving an SDP problem, which can be applied to  small-scale networks. 
%\end{remark}

\subsection{Design of Transmission- and Reflection-Coefficient Vector}

With determined NOMA decoding order and assuming that the power allocation is fixed, the optimization problem of transmission- and reflection beamforming vector is formulated as:

\begin{subequations}\label{p3}
 	\begin{align}
 		\mathop {\max}\limits_{\left\{\mathbf{v}_n\right\}} &\;\;\sum_{k=1}^{K}\sum_{i=1}^{2}R_{k,(i)}^N  \\
 		{\rm{s.t.}}\;\;&\rm(\ref{QoS})-(\ref{unitmo}).
 	\end{align}
\end{subequations}

Since the objective function of problem (\ref{p3}) is not concave, we first transform it to 
\begin{subequations}\label{p4}
	\begin{align}
		\mathop {\max}\limits_{\left\{\mathbf{v}_n,\boldsymbol{\zeta} \right\}} \label{obj4}&\;\;\sum_{k=1}^{K}\sum_{i=1}^{2}\log_2({1+\zeta_{k,(i)}})  \\
		\label{slack}{\rm{s.t.}}\;\;&{\log_2}\left({1 + \frac{{{p_{k,(2)}}{{\left|h_{k,(2)} \right|}^2}}}{{{p_{k,(1)}}{{\left| h_{k,(2)} \right|}^2} +{\sigma_k^2}}}} \right) \geq\zeta_{k,(2)},\log_2\left(1+\frac{p_{k,(1)}|h_{k,(1)}|^2}{\sigma_k^2}\right)\geq\zeta_{k,(1)},\\
		&\rm(\ref{QoS})-(\ref{unitmo}),
	\end{align}
\end{subequations}
where we introduce auxiliary set $\boldsymbol{\zeta}=\{\zeta_{1,(1)},...,\zeta_{K,(2)}\}$ as a new variable, whose elements are equal to the left-hand side of each inequality in (\ref{slack}).
\begin{proposition}
	Problem (\ref{p4}) is equivalent to problem (\ref{p3}) with given decoding order and power allocation.
\end{proposition}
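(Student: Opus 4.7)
The plan is to establish equivalence by the standard relaxation-plus-tightness argument: show that (\ref{p4}) is a lift of (\ref{p3}) in which the auxiliary variables $\boldsymbol{\zeta}$ are slack bounds on the per-user rate expressions, and verify that every optimizer of (\ref{p4}) must make these bounds tight, so that the two problems share the same optimal objective value and the same optimal transmission- and reflection-coefficient vectors $\mathbf{v}_n$.

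First I would establish the easy direction: any feasible $\mathbf{v}_n$ for (\ref{p3}) yields a feasible point of (\ref{p4}) with identical objective value by simply setting each $\zeta_{k,(i)}$ equal to the left-hand side of the corresponding inequality in (\ref{slack}). Under this choice (\ref{slack}) holds with equality, constraints (\ref{QoS})–(\ref{unitmo}) are unaffected since they do not involve $\boldsymbol{\zeta}$, and the objective $\sum_{k,i}\log_2(1+\zeta_{k,(i)})$ evaluates to $\sum_{k,i}R_{k,(i)}^N$. Hence the optimal value of (\ref{p4}) is at least that of (\ref{p3}).

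For the reverse direction, suppose $(\mathbf{v}_n^\star,\boldsymbol{\zeta}^\star)$ is optimal for (\ref{p4}), and assume for contradiction that some inequality in (\ref{slack}) is strict, say
\begin{equation*}
\log_2\!\left(1+\frac{p_{k,(2)}|h_{k,(2)}|^2}{p_{k,(1)}|h_{k,(2)}|^2+\sigma_k^2}\right) > \zeta_{k,(2)}^\star.
\end{equation*}
The variable $\zeta_{k,(2)}$ appears only in this inequality and in the objective term $\log_2(1+\zeta_{k,(2)})$; it is absent from the QoS, energy-conservation, unit-modulus, power, and user-pairing constraints (\ref{QoS})–(\ref{unitmo}). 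Consequently one may increase $\zeta_{k,(2)}^\star$ by a sufficiently small positive amount while leaving $\mathbf{v}_n^\star$ and all other $\zeta_{k,(j)}^\star$ unchanged; the inequality is preserved by continuity and no other constraint is touched. Because $\log_2(1+\zeta)$ is strictly increasing in $\zeta$, the objective strictly increases, contradicting the optimality of $(\mathbf{v}_n^\star,\boldsymbol{\zeta}^\star)$. The same argument applies to the second inequality of (\ref{slack}), so both constraints must be tight at every optimizer.

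Combining the two directions, the $\mathbf{v}_n$-component of any optimizer of (\ref{p4}) is feasible for (\ref{p3}) and achieves the same sum-rate, so the two problems have equal optimal values and equivalent beamforming solutions. The main obstacle here is only bookkeeping: one must carefully check that none of the inherited constraints (\ref{QoS})–(\ref{unitmo}) couple $\boldsymbol{\zeta}$ to $\mathbf{v}_n$, and in particular that the QoS requirement is interpreted through the original rate expressions rather than through the auxiliary $\zeta_{k,(i)}$, so that the single-variable perturbation in the tightness step is truly admissible. Once that decoupling is confirmed, the equivalence follows immediately from monotonicity of the logarithm.
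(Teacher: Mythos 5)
Your proof is correct and follows essentially the same route as the paper's: the authors likewise argue that any slack inequality in (\ref{slack}) can be tightened by increasing the corresponding $\zeta_{k,(i)}$, and that monotonicity of $\log_2(1+\zeta_{k,(i)})$ forces equality at the optimum, so the two problems share the same optimal solution. Your version merely spells out both inclusion directions and the decoupling of $\boldsymbol{\zeta}$ from constraints (\ref{QoS})--(\ref{unitmo}) more explicitly than the paper does.
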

\begin{proof}%\renewcommand{\qedsymbol}{}
%\emph{See Appendix A.}	
If any of the constraints in (\ref{slack}) is satisfied with inequality, then we can always increase the corresponding $\zeta_{k,(i)}$ to make (\ref{slack}) meet with equality. And since the function $\log_2({1+\zeta_{k,(i)}})$ increases monotonically with $\zeta_{k,(i)}$, the objective function (\ref{obj4}) will reach its optimal value when (\ref{slack}) holds with equality. Therefore, problem (\ref{p3}) and problem (\ref{p4}) have the same optimal solution.
\end{proof}
With the introduced auxiliary variables, the objective funtion (\ref{obj4}) is concave and constraint (\ref{QoS}) is convex. However, constraint (\ref{slack}) is still non-convex. To tackle it, we first rewrite constraint (\ref{slack}) as follows:
\begin{equation}\label{re1}
	 p_{k,(1)} {{\left| h_{k,(2)} \right|}^2}\zeta_{k,(2)} +{\sigma_k^2}\zeta_{k,(2)}\leq{p_{k,(2)}}{\left|h_{k,(2)} \right|^2},\ {\sigma_k^2}\zeta_{k,(1)}\leq{p_{k,(1)}}{\left|h_{k,(1)} \right|^2}.
\end{equation}

Following the similar procedure of dealing with the quadratical form $|h_{k,(i)}|^2$ as that in the last subsection, constraint (\ref{re1}) is equivalent to:
%\begin{equations}
	\begin{align}
	&{\sigma_k^2}\zeta_{k,(1)}\leq{p_{k,(1)}}{{\rm{Tr}}(\mathbf{Q}_{k,(1)}\mathbf{V}_n)},\\
	\label{re2}& p_{k,(1)} {{\rm{Tr}}(\mathbf{Q}_{k,(2)}\mathbf{V}_n)}\zeta_{k,(2)} +{\sigma_k^2}\zeta_{k,(2)}\leq{p_{k,(2)}}{{\rm{Tr}}(\mathbf{Q}_{k,(2)}\mathbf{V}_n)},\\
	&\rm{(\ref{psd}),(\ref{rank1_2})}.
	\end{align}
%\end{equations}

Constraint (\ref{re2}) is still non-convex due to the first part of the left-hand side.
Inspired by the method in \cite{cub}, we use a convex upper bound (CUB) to approximate this non-convex term. Specifically, note that for $x,y>0$, and a positive constant $\alpha$, $f(x,y)=xy$ is always upper bounded by $g(x,y)=\frac{\alpha}{2}x^2+\frac{1}{2\alpha}y^2$, i.e., $f(x,y)\leq{g(x,y)}$. Thus, we have:
\begin{equation}\label{re3}
	{{\rm{Tr}}(\mathbf{Q}_{k,(2)}\mathbf{V}_n)}\zeta_{k,(2)}\leq{	\frac{\alpha_{k}}{2}\zeta_{k,(2)}^2+\frac{1}{2\alpha_{k}}{\rm{Tr^2}}(\mathbf{Q}_{k,(2)}\mathbf{V}_n)},
\end{equation}
where $\alpha_{k}$ is a fixed point and the equality of (\ref{re3}) holds when $\alpha_{k}=\frac{\zeta_{k,(2)}}{{\rm{Tr}}(\mathbf{Q}_{k,(2)}\mathbf{V}_n)}$.

By invoking the CUB approximation, constraint (\ref{re2}) can be approximated as:
\begin{equation}\label{re4}
	\frac{\alpha_{k}}{2}\zeta_{k,(2)}^2+\frac{1}{2\alpha_{k}}{\rm{Tr^2}}(\mathbf{Q}_{k,(2)}\mathbf{V}_n)+{\sigma_k^2}\zeta_{k,(2)}\leq{p_{k,(2)}}{{\rm{Tr}}(\mathbf{Q}_{k,(2)}\mathbf{V}_n)}.
\end{equation}

Constraint (\ref{re4}) is then convex and the final subproblem of optimizing transmission and reflection coefficient vector is written as:
\begin{subequations}\label{p5}
	\begin{align}
		\mathop {\max}\limits_{\left\{\boldsymbol{\zeta},\mathbf{V}_n \right\}} &\;\;\sum_{k=1}^{K}\sum_{i=1}^{2}\log_2({1+\zeta_{k,(i)}})  \\
		{\rm{s.t.}}\;\;&\rm(\ref{QoS}),(\ref{energy_con2}),(\ref{psd}),(\ref{sic2}),(\ref{re4}).
	\end{align}
\end{subequations}

Problem (\ref{p5}) needs to be solved iteratively using CUB approximation until convergence. More specifically, we first initialize $\alpha_{k}^0$ based on any feasible points $\zeta_{k,(2)}^0$ and $\mathbf{V}_n^0$. Then for step $t$, we solve problem (\ref{p5}) and update $\alpha_{k}^{t}=\frac{\zeta_{k,(2)}^{t-1}}{{\rm{Tr}}(\mathbf{Q}_{k,(2)}\mathbf{V}_n^{t-1})}$ until the increase of the objective function is within the predefined accuracy $\epsilon$. It is noted that for each iteration, the problem (\ref{p5}) is convex and can be 
efficiently solved by standard toolboxes like CVX\cite{cvx}. 

%
%However, randomly generated initial points $\zeta_{k,i}^0,\mathbf{V}_n^0$ may cause the problem (\ref{p5}) infeasible. To tackle this issue, we propose a feasible point searching algorithm. Specifically, before solving problem (\ref{p5}), we first solve the following feasibility error-minimization problem
%\begin{subequations}\label{p10}
%	\begin{align}
%	\mathop {\min}\limits_{\left\{\boldsymbol{\zeta},\mathbf{V}_n,z \right\}} &\;\;z \\
%		{\rm{s.t.}}\;\;
%		&\gamma-z\leq\sum_{k=1}^K\log_2(1+\zeta_{k,i})\\
%		&\begin{aligned}
%					&	\frac{\alpha_{k,i}}{2}\zeta_{k,i}^2+\frac{1}{2\alpha_{k,i}}{\rm{Tr^2}}(\mathbf{Q}_{k,i}\mathbf{V}_n)+\\
%			&{\sigma_k^2}\zeta_{k,i}-z\leq{P_{k,i}}{{\rm{Tr}}(\mathbf{Q}_{k,i}\mathbf{V}_n)}.
%		\end{aligned}\\
%		&\rm(\ref{unitmo2}),(\ref{psd}).
%	\end{align}
%\end{subequations} 
%where $z$ is an introsuced non-negative auxiliary variable to enlarge the feasible set. When $z$ equals 0, all the constraints of problem (\ref{p10}) is the same as those of (\ref{p5}), so as the feasible set. As a result, the proposed feasibility searching method can be solved primarily before designing the beamforming matrix, and the results can be utilized as feasible initial points to start the algorithm. Problem (\ref{p10}) is more robust and tolerable than problem (\ref{p5}) since the initial point can be randomly generated. 

\begin{algorithm}[t]
	\caption{Transmission- and Reflection-Coefficient Vector Optimization Algorithm}
	\label{alg2}
	\begin{algorithmic}[1]
		\STATE \textbf{Initialize} Initialize channel assignment scheme $\lambda_{k,i}$ and decoding order $\pi_k(i)$, tolerance $\epsilon$. 
		\STATE \qquad Set $t=0$. Generate initial feasible points $\alpha_{k}^t$.% by solving problem (\ref{p10})
		\STATE \qquad Compute $R_{\text{sum}}^t=\sum_{k=1}^{K}\sum_{i=1}^{2}R_{k,(i)}^t$. 
		\STATE \textbf{repeat}
		\STATE \qquad Solve problem (\ref{p5}) to obtain $\mathbf{V}_n^{t+1}$ and $\zeta^{t+1}_{k,(i)}$.
		\STATE \qquad Update $\alpha_{k}^{t+1}$ as $\frac{\zeta_{k,(2)}^{t+1}}{{\rm{Tr}}(\mathbf{Q}_{k,(2)}\mathbf{V}_n^{t+1})}$.	
		\STATE \qquad Compute $R_{\text{sum}}^{t+1}$. 
		\STATE \qquad $t=t+1$.		
		\STATE \textbf{until} $\left|R_{\text{sum}}^{t+1}-R_{\text{sum}}^{t}\right|\leq{\epsilon}$.
	\end{algorithmic}
\end{algorithm}
The process of optimizing transmission- and reflection-coefficient vectors is summarized in \textbf{Algorithm 3}.

\subsection{Power Allocation}
With derived decoding order and beamforming vector, the power allocation can be obtained by solving the following problem
\begin{subequations}\label{p6}
	\begin{align}
		\mathop {\max}\limits_{\left\{\mathbf{p} \right\}} &\;\;\sum_{k=1}^{K}\sum_{i=1}^{2}R_{k,(i)}  \\
		{\rm{s.t.}}\;\;&\rm(\ref{QoS}),(\ref{positvepower}),(\ref{totalpower}).
	\end{align}
\end{subequations}

Note that with given beamforming vector $\mathbf{v}_n$, the channel power gains $\left|h_{k,(i)}\right|^2$ are settled. For ease of expression, we denote $m_{k,(i)}=\frac{\sigma_k^2}{|h_{k,(i)}|^2}$ as the normalized noise power of the $i$-th user on sub-channel $k$, $i=1,2$. Based on the determined decoding order, we always have $m_{k,(2)}>m_{k,(1)}$. Next, we will reformulate problem (\ref{p6}) as geometric programming utilizing the following lemma. 
\begin{lemma}\label{lemma2}
	By a change of varibales, problem (\ref{p6}) can be converted to the following GP problem
\begin{subequations}\label{p7}
	\begin{align}
				\mathop {\max}\limits_{\left\{r_{k,(i)} \right\}} &\;\;\sum_{k=1}^{K}\sum_{i=1}^{2}\log_2({r_{k,(i)}})  \\
		\label{gp_power}{\rm{s.t.}}\;\;&r_{k,(i)}^{-1}\leq1,\\
		\label{gp_qos}&\prod_{k=1}^K{r_{k,(i)}^{-1}}\leq2^{-\gamma},\\
		\label{gp_totalpower}&\sum_{k=1}^K\frac{m_{k,(1)}r_{k,(1)}r_{k,(2)}+(m_{k,(2)}-m_{k,(1)})r_{k,(2)}}{{P_{\text{max}}}+m_{k,(2)}}
		\leq1,
	\end{align}
\end{subequations}
where $r_{k,(i)}$ is the new variable which is equal to $2^{R_{k,(i)}}$. 

The corresponding power allocation $p_{k,(i)}$ can be derived by solving the equations (\ref{opt_p}).

\end{lemma}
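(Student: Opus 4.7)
The plan is to invert the two rate expressions on each sub-channel, writing the transmit powers $p_{k,(1)}, p_{k,(2)}$ explicitly in terms of the rate-exponentials $r_{k,(1)}=2^{R_{k,(1)}}$ and $r_{k,(2)}=2^{R_{k,(2)}}$, and then substituting these back so that every constraint of problem (\ref{p6}) becomes a posynomial (or monomial) inequality in the $r$ variables while the sum-rate objective becomes $\log_2$ of a monomial. That is the canonical form of a geometric program, so optimality follows from standard GP theory.

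First I would use the fixed decoding order to write $r_{k,(1)} = 1 + p_{k,(1)}/m_{k,(1)}$ (no intra-pair interference) and $r_{k,(2)} = 1 + p_{k,(2)}/(p_{k,(1)}+m_{k,(2)})$ (interference from user $(k,(1))$). This triangular system solves in closed form for $p_{k,(1)}$ and $p_{k,(2)}$, producing the equations (\ref{opt_p}) referenced in the lemma. Adding the two expressions gives $p_{k,(1)}+p_{k,(2)} = m_{k,(1)} r_{k,(1)} r_{k,(2)} + (m_{k,(2)}-m_{k,(1)}) r_{k,(2)} - m_{k,(2)}$; substituting into the sum-power budget (\ref{totalpower}) and transferring the constant terms to the right-hand side yields a posynomial inequality that, after dividing by a suitable positive constant, takes the normalized form (\ref{gp_totalpower}). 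The positivity constraint (\ref{positvepower}) translates directly into $r_{k,(i)}\geq 1$, equivalently (\ref{gp_power}); the per-user QoS constraint (\ref{QoS}) translates into $r_{k,(i)}\geq 2^{\gamma}$, i.e.\ the monomial inequality (\ref{gp_qos}); and the objective $\sum_{k,i} R_{k,(i)} = \sum_{k,i}\log_2 r_{k,(i)}$ is already in the standard GP objective form.

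The main obstacle will be verifying that the reformulated sum-power constraint is a genuine posynomial, since the raw expansion produces the potentially sign-ambiguous coefficient $(m_{k,(2)}-m_{k,(1)})$ in front of $r_{k,(2)}$. This is precisely where the SIC ordering convention pays off: with $\pi_k((1))=1$ and $\pi_k((2))=0$, constraint (\ref{sic}) enforces $|h_{k,(1)}|^2\geq|h_{k,(2)}|^2$, so $m_{k,(1)}\leq m_{k,(2)}$ and the coefficient is nonnegative, making the left-hand side a legitimate posynomial. Once this positivity is in place, the map $r_{k,(i)}\mapsto p_{k,(i)}$ defined by (\ref{opt_p}) is a strictly increasing bijection between the feasible regions of (\ref{p6}) and (\ref{p7}) that preserves the value of the objective, so any optimizer of the GP recovers an optimizer of (\ref{p6}) through (\ref{opt_p}). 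Convexity of (\ref{p7}) then follows from the standard log-change of variable $\tilde r_{k,(i)} = \ln r_{k,(i)}$, which turns posynomial inequalities into log-sum-exp constraints and the objective into a linear function, so global optimality is reached by any convex solver such as CVX.
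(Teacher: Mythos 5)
Your proposal is correct and follows essentially the same route as the paper's Appendix~B: invert the boundary rate expressions to obtain (\ref{opt_p}), substitute into the sum-power budget to get the posynomial constraint, and invoke the decoding-order condition $m_{k,(2)}\geq m_{k,(1)}$ to guarantee nonnegativity of the coefficient of $r_{k,(2)}$. You are in fact slightly more thorough than the paper (which only treats the power constraint); the one mismatch is that your faithful translation of (\ref{QoS}) gives the per-subchannel inequality $r_{k,(i)}^{-1}\leq 2^{-\gamma}$ rather than the product form written in (\ref{gp_qos}), but that discrepancy originates in the lemma statement itself, not in your argument.
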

 \begin{proof}\renewcommand{\qedsymbol}{}
 	\emph{See Appendix B.}	
 \end{proof}
The objective function of problem (\ref{p7}) is convex, and the left-hand side parts of constraints (\ref{gp_power})-(\ref{gp_totalpower}) are all posynomials (or monomials)\cite{gp}. Thus, problem (\ref{p7}) is  consistent with the GP form. Note that GP can be equivalently converted to a convex optimization problem \cite{gp}, whose optimal solution can be obtained using toolboxes such as CVX\cite{cvx}. 

The proposed three-step algorithm for optimizing decoding order, beamforming vector, and power allocation is summarized in \textbf{Algorithm 4}.

%\begin{proposition}
%When the total power budget is equally allocated to each sub-carrier, i.e., constraint (\ref{totalpower}) is given by $\sum_{i=1}^{I}p_{k,i}\leq{P_{\text{max}}}/K$. The power allocation scheme can be given as closed-form solutions such that
%\begin{equation}
%	P_{\pi_k^{-1}(i)}^*=(2^{\gamma}-1)(m_{\pi_k^{-1}(i)}+\sum_{j>i}P_j^*)
%\end{equation}
%\end{proposition}
%\begin{proof}\renewcommand{\qedsymbol}{}
%	\emph{See appendix C.}
%\end{proof}
\subsection{Convergence and Complexity Analysis}
\begin{algorithm}[t]
	\caption{Proposed Three-Step Algorithm for NOMA}
	\label{alg3}
	\begin{algorithmic}[1]
		\STATE \textbf{Initialize} Initialize channel assignment scheme $\lambda_{k,i}$ via \textbf{Algorithm 1}, tolerance $\epsilon$. 
		\STATE \textbf{Step 1} Design of decoding order
		\STATE \qquad Solve problem (\ref{pro_max_channel_gain2}) to obtain the channel gain of each user $|h_{k,i}|^2$.
		\STATE \qquad Obtain $\pi_{k}(i)$ according to derived $|h_{k,i}|^2$.
		\STATE \textbf{Stage 2} Optimization of beamforming vector 
		\STATE \qquad Optimize the transmission and reflection beamforming vector via \textbf{Algorithm 3}.
		\STATE \textbf{Stage 3} Optimal power allocation
		\STATE \qquad Solve problem (\ref{p7}) using GP.
		\STATE \textbf{Output} Converged solution $\boldsymbol{\pi}^*,\mathbf{\Theta}^*,\boldsymbol{p}^*$.
	\end{algorithmic}
\end{algorithm}
%Based on the proposed algorithms, the decoding order, beamforming matrix and power allocation are optimized step-by-step. %The details of solving the first subproblem of (\ref{problem}) is summarized in \textbf{Algorithm 1}.

\emph{1) Convergence Analysis:} 
For LMA, since it is a special case of the swap operation-enabled algorithm, the algorithm is guaranteed to converge to a two-sided exchange stable matching until there is no swap-blocking pair. Then, the convergence of the proposed three-step algorithm mainly depends on \textbf{Algorithm 3} since only the second step requires iteration. Next, we will prove the convergence of Step 2. Define $U(\boldsymbol{\Theta}^{t_0})$ and $U(\boldsymbol{\Theta}^{t_0},\boldsymbol{\zeta}^{t_0})$ as the objective function’s value of
problem (\ref{p3}) and problem (\ref{p4}) at the $t_0$-th iteration, respectively. We have
\begin{equation}
	\begin{aligned}	U(\boldsymbol{\Theta}^{t_0})&\overset{(a)}{=}U(\boldsymbol{\Theta}^{t_0},\boldsymbol{\zeta}^{t_0})\overset{(b)}{=}U(\boldsymbol{\Theta}^{t_0},\boldsymbol{\zeta}^{t_0},\boldsymbol{\alpha}^{t_0})\\
		&\overset{(c)}{\leq}U(\boldsymbol{\Theta}^{t_0},\boldsymbol{\zeta}^{t_0+1},\boldsymbol{\alpha}^{t_0})\overset{(d)}{=}U(\boldsymbol{\Theta}^{t_0},\boldsymbol{\zeta}^{t_0+1},\boldsymbol{\alpha}^{t_0+1})
	\end{aligned}
\end{equation}
where (a) is due to the equivalent transformation of problem, which is proved in \textbf{Proposition 2}. (b) and (d) hold since we use the CUB to replace the non-convex parts, which is updated by the last iteration. (c) comes from that $\boldsymbol{\zeta}$ is the only variable to be optimized with other variables fixed. Thus, after each iteration, the objective function is non-decreasing. Since the system sum rate is upper-bounded, the overall algorithm for Step 2 is guaranteed to converge.
%Thus, the objective value of \textbf{Algorithm 2} is monotonically non-decreasing.

\emph{2) Complexity Analysis:}  For LMA, since the swap operations only occur among users belonging to the same region, the maximum number of swap operations is reduced to $\frac{I/2(I/2-1)}{2}$. The sub-problem of decoding order design is solved using SDP with complexity $\mathcal{O}(M)^{4.5}$. 
The complexity of Algorithm 3 to optimize transmission- and reflection-coefficient matrices is $\mathcal{O}(I_{\text{itr}}(2M+I)^{4.5}\log(1/\epsilon))$, where $I_{\text{itr}}$ is the iterations taken to converge, $\epsilon$ is the accuracy\cite{sdr}. The complexity of solving the GP of power allocation using interior-point methods is given by $\mathcal{O}(I)^{3.5}$\cite{convex}.

\section{Numerical Results}

In this section, numerical results are provided to validate our proposed designs. As illustrated in Fig. \ref{setup}, we consider a three-dimensional (3D) coordinate system, where the AP is located at the origin and the STAR-RIS is located in $x$-axis. The reference center of the STAR-RIS is set at (50, 0, 0) meters. The served users are uniformly and randomly distributed on a circle centered at the STAR-RIS with a radius of $r=$ 5 m. The distance-dependent path loss is modeled as $L(d)=\rho_0(\frac{d}{d_0})^{-\alpha}$, where $d$ is the individual link distance, $\rho_0$ denotes the path loss at the reference distance $d_0$ = 1 m, and $\alpha$ denotes the path loss exponent. Then, the Rician fading subchannels from the AP to the STAR-RIS and from the STAR-RIS to users are given by
\begin{figure}[t!]
	\centering
	\begin{minipage}[t]{0.45\linewidth}
		\includegraphics[width=3in]{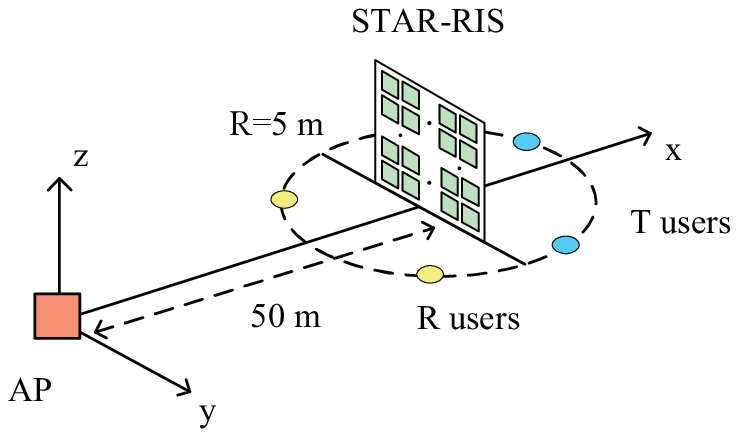}
		\caption{Simulation setup.}
		\label{setup}
	\end{minipage}
	\quad
	\begin{minipage}[t]{0.45\linewidth}
		\includegraphics[width=2.6in]{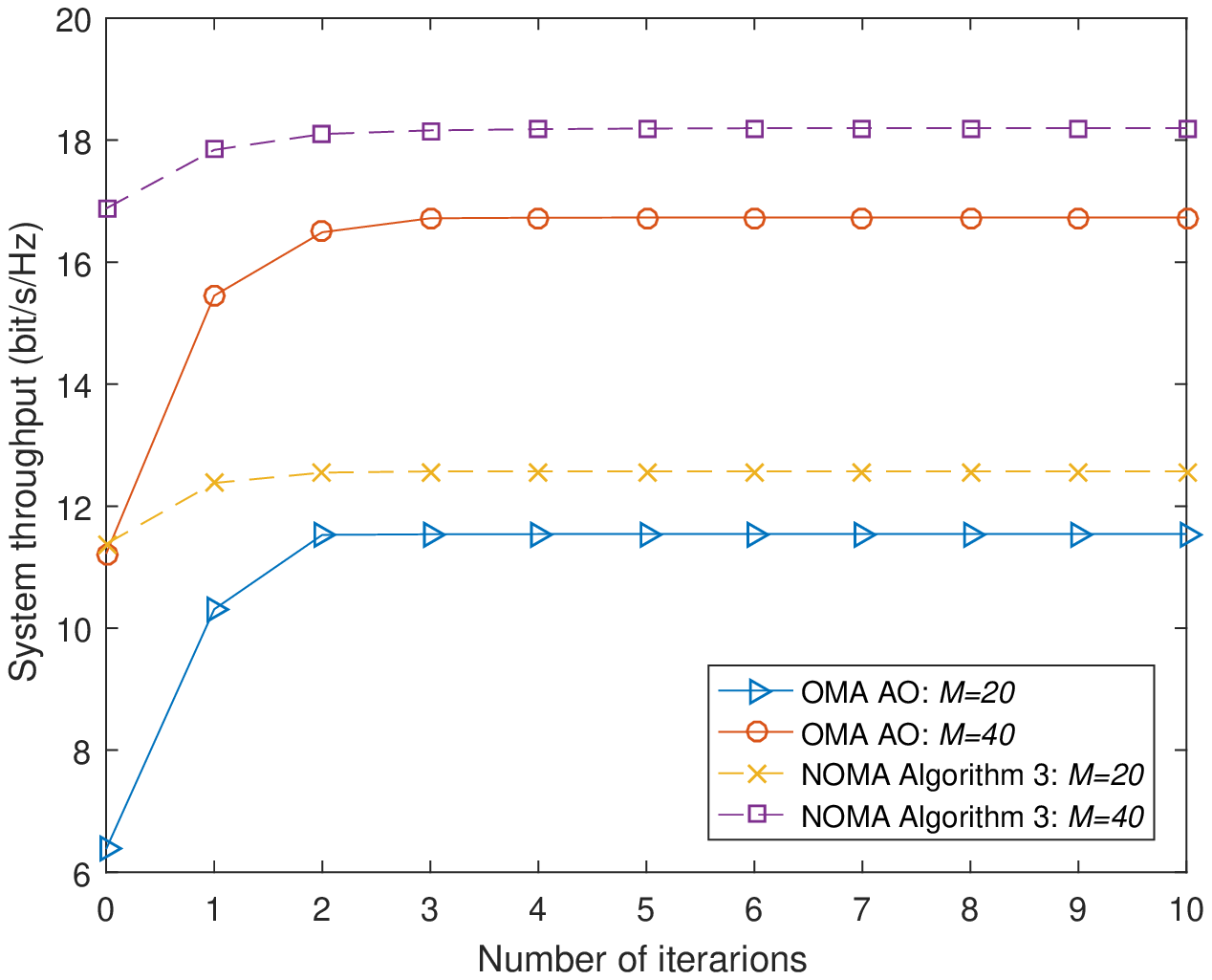}
		\caption{Convergence behavior of proposed AO for OMA and Algorithm 3 for NOMA. $I=6.$}
		\label{con}
	\end{minipage}
\end{figure}

\begin{subequations}
	\begin{align}
			&{\mathbf{g}_k} = \sqrt {L(d_{AS})} \left( {\sqrt {\frac{{{\kappa}_{AS}}}{{{\kappa}_{AS} + 1}}} {\mathbf{g}}_k^{{\rm{LoS}}} + \sqrt {\frac{1}{{{\kappa}_{AS} + 1}}} {\mathbf{g}}_k^{{\rm{NLoS}}}} \right),\\
			&{{\mathbf{f}}_{k,i}} = \sqrt {L(d_{SU,i})} \left( {\sqrt {\frac{{{\kappa}_{SU}}}{{{\kappa}_{SU} + 1}}} {\mathbf{f}}_{k,i}^{{\rm{LoS}}} + \sqrt {\frac{1}{{{\kappa}_{SU} + 1}}} {\mathbf{f}}_{k,i}^{{\rm{NLoS}}}} \right),
	\end{align}
\end{subequations}
where $d_{AS}$ and ${d_{SU,i}}$ denote the distance between the AP and the STAR-RIS and between the STAR-RIS and user $i$, respecively, ${{\kappa_{AS}}}$ and $\kappa_{SU}$ denote the Rician factor, ${{\mathbf{g}}_k^{{\rm{LoS}}}}$ and ${{\mathbf{f}}_{k,i}^{{\rm{LoS}}}}$ are the deterministic line-of-sight (LoS) components, ${{\mathbf{g}}_k^{{\rm{NLoS}}}}$ and ${{\mathbf{f}}_{k,i}^{{\rm{NLoS}}}}$ are the non-line-of-sight (NLoS) components modeled as Rayleigh fading. In this paper, the path loss exponents for the AP-STAR-RIS link and STAR-RIS-user link are set to be $\alpha_{AS}=$ 2.2, $\alpha_{SU}=$ 2.8, respectively, the Rician factors are $\kappa_{AS}=\kappa_{SU}=3$ dB, and the noise power is set to be $\sigma_k=-80$ dBm\cite{simulation}. The common minimum rate requirement is given by $\gamma=0.1$ bit/s/Hz. The tolerance of convergence $\epsilon$ is set to be $10^{-4}$. The maximum transmit power at the AP is set as $P_{\text{max}}=1.5$ W, unless stated otherwise.

%\begin{figure}[t]
%	\centering
%	\includegraphics[width=0.45\textwidth]{eps/system_setup2.eps}\\
%	\caption{Simulation setup}\label{setup}
%\end{figure}
%\begin{figure}[t]
%	\centering
%	\subfloat[Covergence behavior of the first stage of algorithm 2. $K=4.$]{		
%		\includegraphics[width=0.4\textwidth]{eps/convergence1.eps}	
%	}
%	\hspace{0.1em}
%	\subfloat[Covergence behavior of the overall algorithm. $M=20.$]{		
%		\includegraphics[width=0.4\textwidth]{eps/convergence2.eps}
%	}	
%	\caption{Convergence behavior of the proposed algorithms.}
%	\label{con}
%\end{figure}
\subsection {Convergence Behavior of Proposed Algorithms}

%\begin{figure}[htbp]
%	\centering
%	\subfigure[pic1.]{
%		\includegraphics[width=5.5cm]{111.eps}
%		%\caption{fig1}
%	}
%	\subfigure[pic2.]{
%		\includegraphics[width=5.5cm]{111.eps}
%	}
%	\quad
%	\subfigure[pic3.]{
%		\includegraphics[width=5.5cm]{111.eps}
%	}
%	\subfigure[pic4.]{
%		\includegraphics[width=5.5cm]{111.eps}
%	}
%	\caption{ pics}
%\end{figure}

\begin{figure}[t!]
	\centering
	\begin{minipage}[t]{0.45\linewidth}
		\includegraphics[width=2.8in]{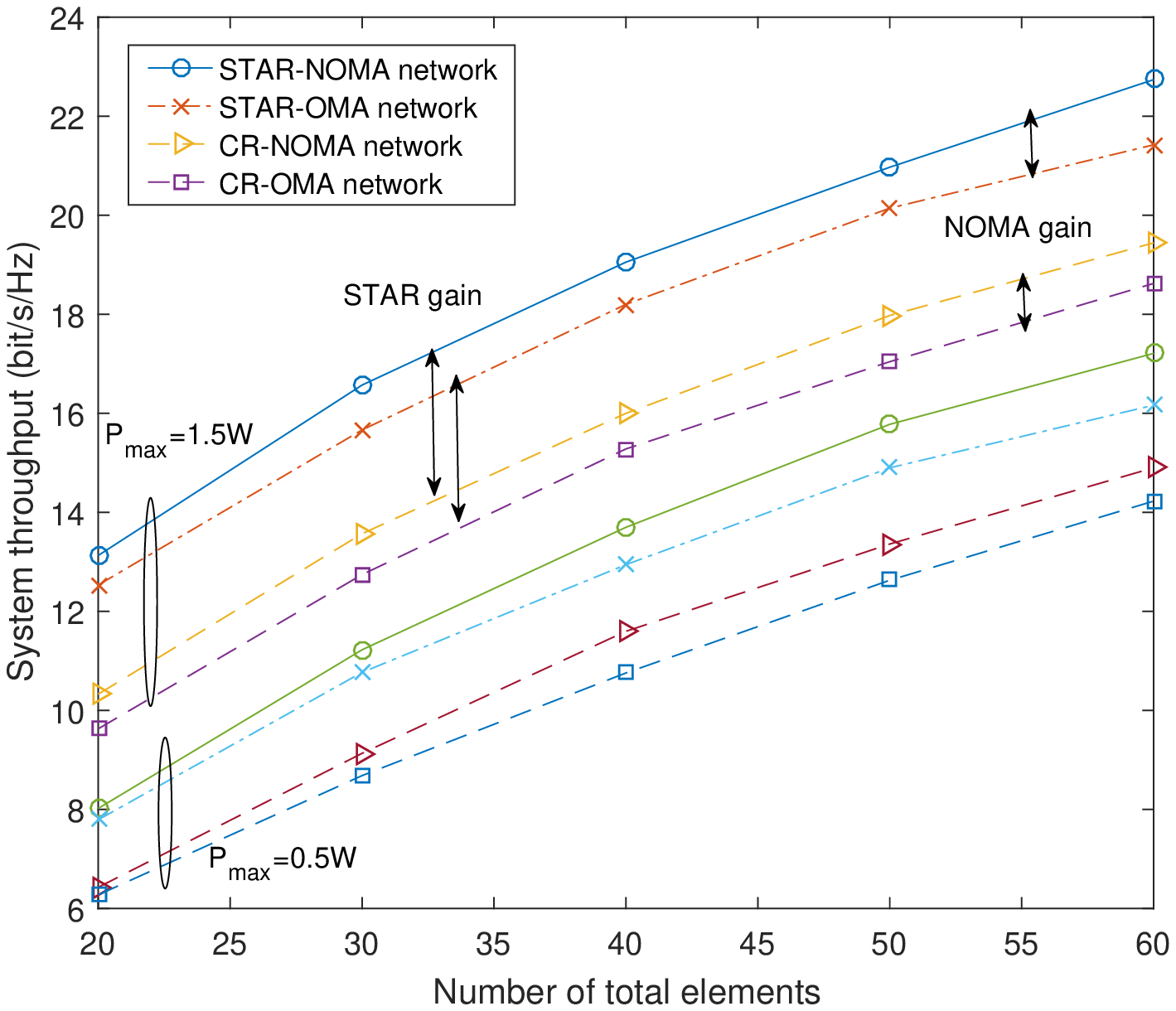}
		\caption{The system sum-rate versus the number of total elements $M$ with different energy budget. $I=6,K=3.$}
		\label{m1}
	\end{minipage}
	\quad
	\begin{minipage}[t]{0.45\linewidth}
		\includegraphics[width=3in]{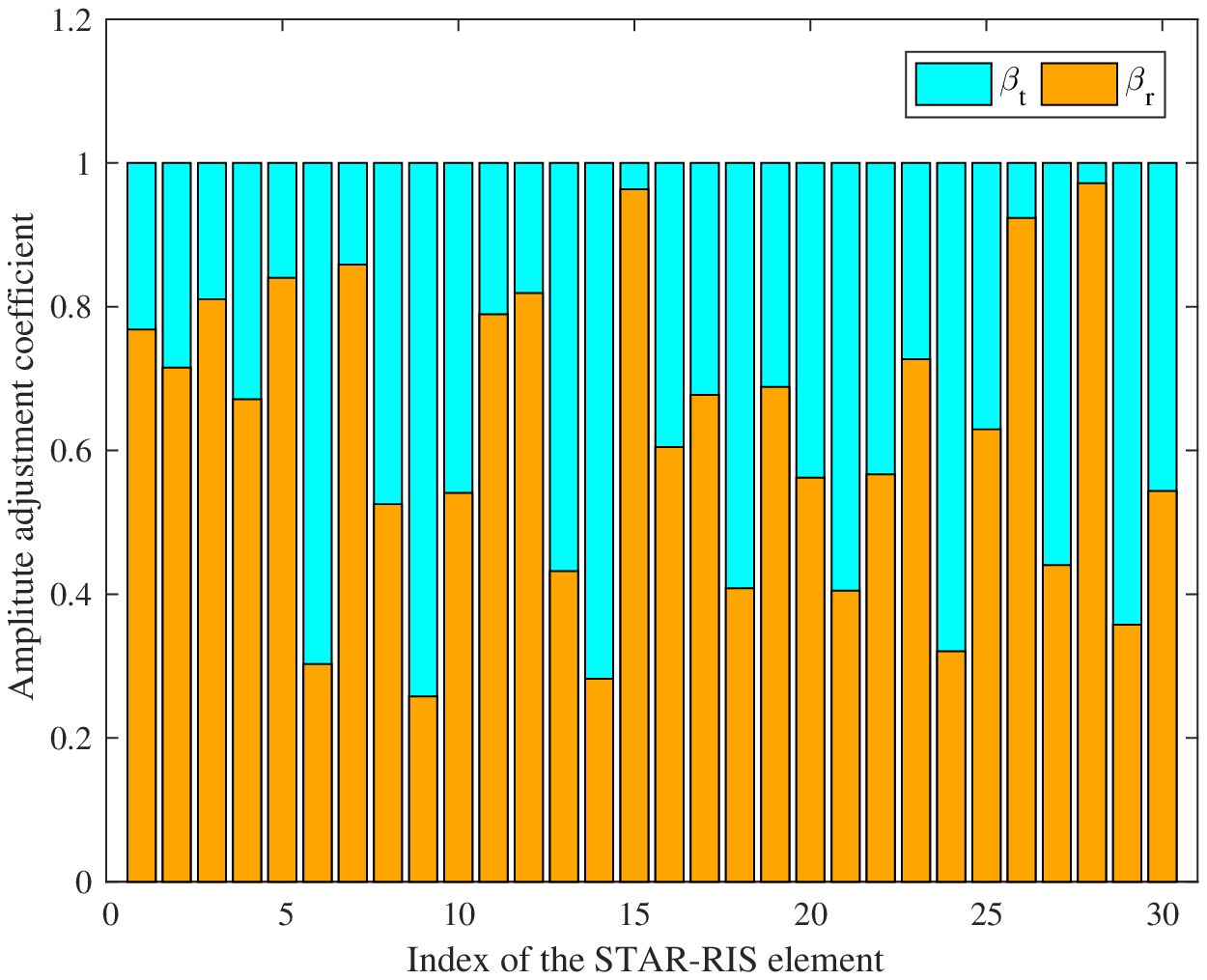}
		\caption{Amplitude adjustment of transmitted and reflected signals by each element for NOMA. $I=6,K=3$.}
		\label{amplitude}
	\end{minipage}
\end{figure}
In Fig. \ref{con}, the convergence behavior of the proposed algorithms is illustrated using numerical results. Specifically, for OMA, we demonstrate the convergence behavior of proposed AO algorithm. It is observed from the figure that the proposed AO algorithm for OMA converges fast, mostly with 3 iterations. For NOMA, we plot the convergence of Algorithm 3 since we use CUB to approximate the optimal solution through iterations. From Fig \ref{con}, it is found that the number of iterations required for the convergence of Algorithm 3 increases with $M$ since the complexity increases exponentially with larger-scale beamforming coefficient matrices to be optimized.
%The convergence of the overall Algorithm 1 with different $K$ is shown in Fig \ref{con}(b), from which we can observe that the steps taken to converge increase with the number of users (subchannels) since more swap operations are executed. From the figure we can see that when $I=10$ and $K=5$, Algorithm 1 requires 5 steps iterations to converge, and there requires over 10 swap operations behind on average.

%\begin{figure}   
%	\begin{minipage}[t]{0.5\linewidth} % 如果一行放2个图，用0.5，如果3个图，用0.33  
%		\centering   
%		\includegraphics[width=1in]{eps/convergence1.eps}   
%		\caption{Small Box}   
%		\label{fig:side:a}   
%	\end{minipage}%   
%	\begin{minipage}[t]{0.5\linewidth}   
%		\centering   
%		\includegraphics[width=1.5in]{eps/convergence2.eps}   
%		\caption{Big Box}   
%		\label{fig:side:b}   
%	\end{minipage}   
%\end{figure}

\subsection{Impact of the STAR-RIS}
To demonstrate the benefits brought by STAR-RIS, we consider the following baseline schemes:
\begin{itemize}
%	\item \textbf{STAR-RIS assisted OMA System:} In this case, we consider OMA transmission protocol such that frequency and time division multiple access are considered to serve multiple users simultaneously. 	
	\item \textbf{Conventional RIS-assisted OMA/NOMA system (also referred to as CR OMA/NOMA system):} In this case, one traditional reflecting-only and one transmitting-only RIS are deployed adjacently at the same place as the STAR-RIS to achieve full-space coverage. For a fair comparison, each CR is equipped with $M/2$ elements. Note that the resource allocation problem in CR-aided multi-carrier system has been studied in \cite{nomaris3}.
\end{itemize}

%\begin{figure}[t]
%	\centering
%	\includegraphics[width=0.5\textwidth]{eps/rate_M.eps}\\
%	\caption{The system sum rate versus the number of total elements $M$ with different energy budget. $I=6,K=3.$}\label{m1}
%\end{figure}
%
%\begin{figure}[t]
%	\centering
%	\includegraphics[width=0.5\textwidth]{eps/amplitude.eps}\\
%	\caption{Amplitude adjustment of transmitted and reflected signals by each element. $I=6,K=3,P_\text{max}=0.6$ W.}\label{amplitude}
%\end{figure}
Fig. \ref{m1} shows the system achievable sum-rate versus the total number of elements $M$ with different transmit power budgets. Firstly, it is observed that the sum rates of all considered systems increase dramatically with more total elements since they benefit from higher beamforming gain. Secondly, the sum-rate of the STAR-NOMA network outperforms that of the STAR-OMA networks due to the multiplexing gain. Note that since the users are located at the same distance from the STAR-RIS, the original cascaded channel condition of each user is similar, which is undesirable for NOMA transmission. However, with the STAR-RIS, the NOMA gain is significantly enhanced since the channel conditions of different users are reconfigured to be more distinct. %It is worth mentioning that the OMA-II scheme yields great performance gain compared with OMA-I, which reveals the importance of optimizing the time allocation factor. 
Finally, the STAR-aided networks also outperform conventional RIS-aided networks, and the performance gap increases as there are more elements. This is because STAR-RIS can make full use of the DoFs to manipulate the signal propagation. Reversely, for CR, the performance gain is limited since there are fewer coefficients to be optimized. %Actually, the optimization problem for CR serves as a special case of STAR-RISs, where each metasurface only works at single-mode (transmission or reflection). %Moreover, by properly design the transmission and reflection beamforming coefficient, the STAR-RIS can enlarge the channel disparities to further increase the gain of NOMA compared with CR NOMA system.

Fig. \ref{amplitude} plots the amplitude adjustment of the transmitted and reflected signals for each STAR-RIS element in a NOMA network. In this case, we consider a scenario that 6 users are located symmetrically on the two regions. The NOMA gain may not be fully reaped if the network is assisted by conventional RISs since the channel gains of served users approximate each other.
However, STAR-RIS can well overcome this drawback by adjusting the amplitudes of incident signals to allocate more power to the side with more `strong users'. As shown in Fig. \ref{amplitude}, the STAR-RIS allocates more energy to the reflection region after user pairing, since two users with higher decoding orders are located in this region. Thus, more users can benefit from the NOMA transmission.

\subsection{Impart of Channel Assignment}
%\begin{figure}[t]
%	\centering
%	\includegraphics[width=0.5\textwidth]{eps/user46.eps}\\
%	\caption{Empirical CDF of sum rate performance under different channel assignment schemes with $I=4$ and $I=6$. $P_\text{max}=0.15$ W.}\label{cdf}
%\end{figure}

%\begin{minipage}[t]{0.45\linewidth}
%	\includegraphics[width=2.8in]{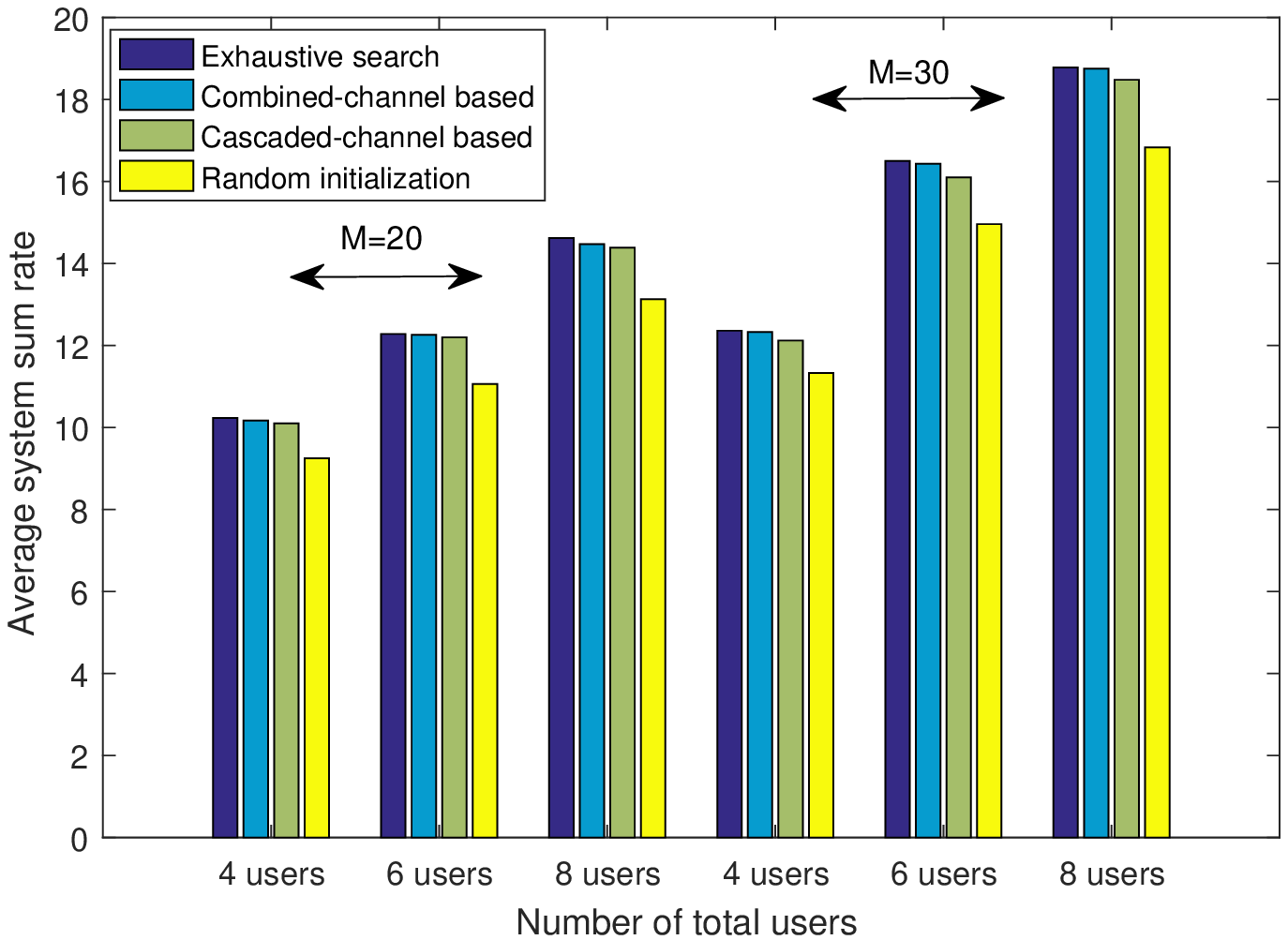}
%	\caption{Average system sum-rate with different number of users under different docoding order designs.}
%	\label{do}
%\end{minipage}
To evaluate the effectiveness of our proposed channel assignment schemes for OMA and NOMA, we consider the exhaustive search-based algorithm as a benchmark, which also serves as the performance upper bound. We also consider the complementary user-pairing scheme of LMA as a baseline, referred to as the same-region matching algorithm (SMA), where we preferentially pair users located in the same region on a sub-channel. We plot the cumulative distribution function (CDF) of the system sum rate under different channel assignment schemes with 4 and 6 total users, respectively.

\begin{figure}[t!]
	\centering	
	\begin{minipage}[t]{0.45\linewidth}
		\includegraphics[width=2.8in]{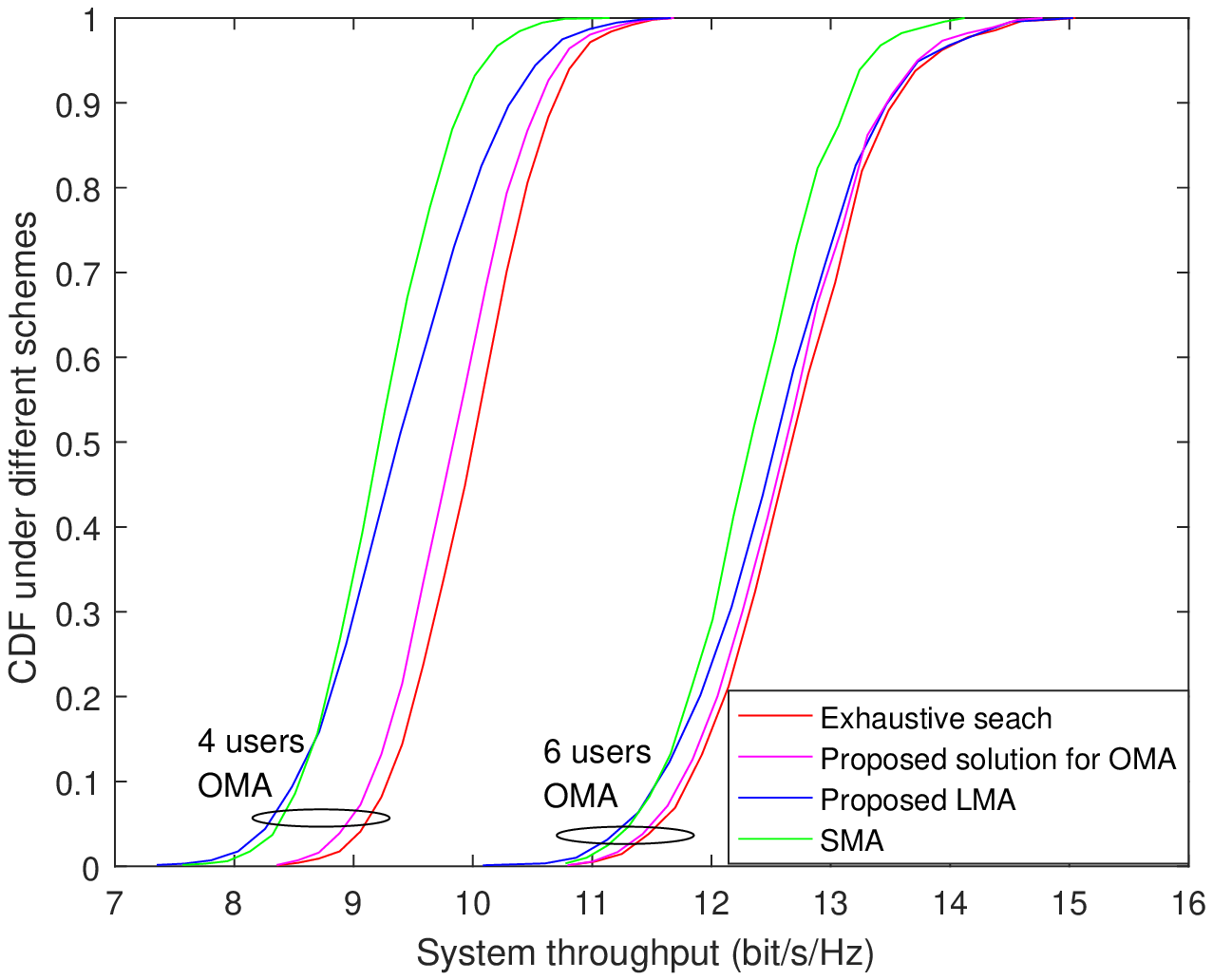}
		\caption{Empirical CDF of sum rate performance under different channel assignment schemes for OMA with $I=4$ and $I=6$.}
		\label{cdf_oma2}
	\end{minipage}
	\quad
	\begin{minipage}[t]{0.45\linewidth}
		\includegraphics[width=2.8in]{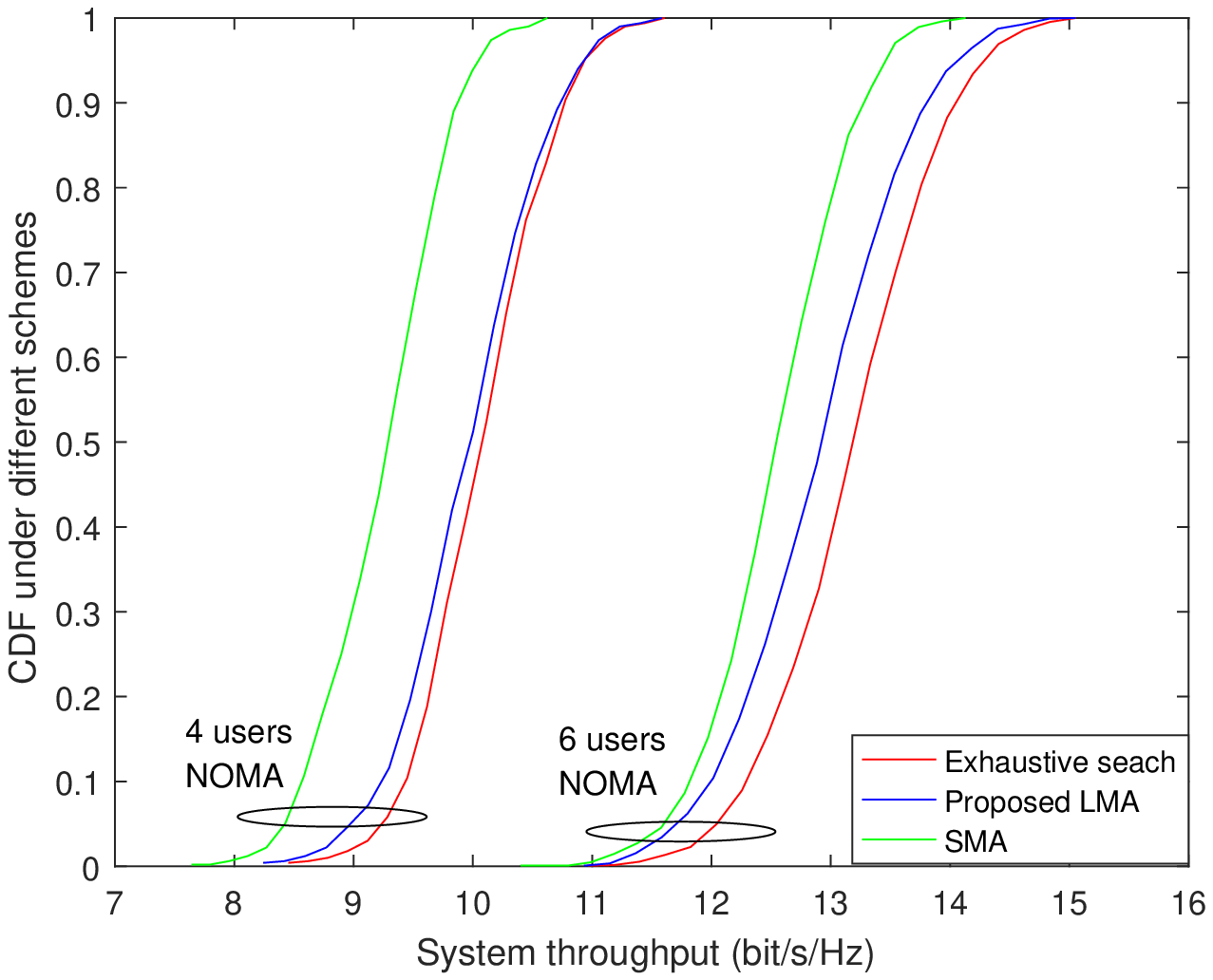}
		\caption{Empirical CDF of sum rate performance under different channel assignment schemes for NOMA with $I=4$ and $I=6$.}
		\label{cdf}
	\end{minipage}
\end{figure}

Fig. \ref{cdf_oma2} evaluates the CDF performance of our proposed channel assignment algorithm for OMA. It is observed from Fig. \ref{cdf_oma2} that the proposed solution of channel assignment for OMA achieves near-optimal behavior. To obtain more design insights, we also adopt LMA and SMA schemes for comparison. For the 6-user setting, LMA could also achieve comparable performance as the exhaustive search-based algorithm. In this case, LMA outperforms SMA for most of the time. However, for the setting with 4 users, LMA and SMA both have the probability of reaching the optimal performance as the exhaustive search-based algorithm. This inspires us that the transmission-reflection user-pairing scheme is not always the optimal choice for OMA transmission. Instead, we ought to consider all possibilities of channel assignment to achieve a good performance, i.e., also consider grouping users that are located on the same side of the STAR-RIS.

Fig. \ref{cdf} illustrates the effectiveness of our proposed channel assignment design for NOMA. For the 4-user setting with 2 subchannels, since there are 4 different channel assignment choices for LMA but only 2 choices for SMA, LMA is analytically better than SMA, which is verified by the simulation results. For the 6-user setting, the total number of channel assignment choices of LMA and SMA is the same, which is 48. However, LMA still yields better performance than SMA.
It is observed from Fig. \ref{cdf} that LMA has a high probability of having the same sum rate as the exhaustive search-based algorithm, while SMA can hardly reach the optimal solution. This is because the transmission-reflection user pairing scheme can make full use of the DoF that the STAR-RIS brings to wireless communication networks. Specifically, the manipulation of amplitudes is dominant and effective for channel reconfiguration, so that NOMA benefits a lot from channel disparities that the STAR-RIS provides. This insight inspires us to employ the transmission-reflection user pairing scheme for NOMA transmission, which achieves near-optimal behavior and can be performed with low complexity.  

\subsection{Impact of NOMA Decoding Orders}

%\begin{figure}[t]
%	\centering
%	\includegraphics[width=0.45\textwidth]{eps/do.eps}\\
%	\caption{Average system sum-rate with different number of users under different dosing order designs. $P_\text{max}=0.15$ W.}\label{do}
%\end{figure}
Fig. \ref{do} depicts the impact of the proposed decoding order designs on the system rate performance with different numbers of users and total elements $M$. To evaluate our purposed schemes, three benchmark algorithms for decoding order design are considered, which are exhaustive search algorithm, random initialization algorithm, and cascaded channel-based algorithm respectively. For exhaustive search, we search over all different decoding orders and the results serve as the performance upper bound. For the random initialization algorithm, we generate arbitrary decoding orders for each sub-channel. We also consider the calculated decoding orders based on the transmission- and reflection-coefficient vectors given in (\ref{ini_phase}), referred to as the cascaded channel-based algorithm. It is observed from Fig. \ref{do} that the system sum-rate deteriorates under the random decoding orders for all settings, which indicates the importance of finding the proper decoding orders. With no more than 6 users, the cascaded channel-based algorithm and our proposed maximum combined channel gain-based algorithm can achieve near-optimal performance, while the performance of cascaded channel-based algorithm starts to decrease when $I=8$ and $M=30$. It is deducted that the cascaded channel-based algorithm is only suitable for small-scale networks since the STAR-RIS may adjust the incident signals by a large margin and the original channel superiority is changed.

\begin{figure}[t!]
	\centering
	\includegraphics[width=0.45\textwidth]{eps/do.eps}\\
	\caption{Average system sum-rate with different number of users and elements under different decoding order designs.}\label{do}
\end{figure}

%\begin{figure}[t]
%	\centering
%	\includegraphics[width=0.42\textwidth]{eps/convergence1.eps}\\
%	\caption{Convergence behavior of Algorithm 2 Stage 1.}\label{conv1}
%\end{figure}

\section{Conclusions}
In this paper, we have investigated the resource allocation scheme in a STAR-RIS-aided OMA and NOMA network, where the transmission-reflection beamforming at the STAR-RIS and the resource allocation policy at the AP have to be jointly optimized to maximize the sum achievable rate. To tackle the MINLP and NP-hard problem for OMA, we first proposed an efficient scheme to determine the channel assignment using matching theory. Then, we proposed an alternating optimization-based algorithm to optimize the power and time allocation, and beamforming vectors iteratively. For NOMA, a matching algorithm with reduced complexity was proposed to determine the channel assignment, where a transmitted user and a reflected user were grouped on a sub-channel. After channel assignment, we optimized the decoding order, beamforming vector, and power allocation step by step using convex upper bound approximation and geometry programming. Simulation results validated the effectiveness of the proposed designs. In particular, compared with conventional RIS-aided networks and STAR-RIS-OMA networks, STAR-RIS NOMA yielded remarkable performance gains in terms of system sum-rate. Numerical results also inspired us to employ the simple transmission-reflection user-pairing scheme in STAR-RIS-aided multi-carrier NOMA networks. While for OMA, a more comprehensive design that also includes same-side user-pairing was required.
\vspace{-0.2cm}
\section*{Appendix~A: Proof of Theorem~1} \label{Appendix A}
If we prove that a special case of problem (\ref{pro_oma}) and (\ref{problem}) is NP-hard, then the original problems are also NP-hard problems. So we consider the simplest case, where we only optimize the channel assignment and a T user is only paired with an R user on a subchannel.

As defined, $\mathcal{I}_t$ and $\mathcal{I}_r$ are two disjoint sets, $\mathcal{I}_t\cup\mathcal{I}_r=\mathcal{I}$, $\mathcal{I}_t\cap\mathcal{I}_r=\emptyset$ and $|\mathcal{I}_t|=|\mathcal{I}_r|=I/2$. Define subset $\mathcal{D}=\{\mathcal{D}_1,...,\mathcal{D}_K\}$ as a subset of $\mathcal{K}\times\mathcal{I}_t\times\mathcal{I}_r$, where the tuples are $\mathcal{D}_k=(k,i,j)$. The mentioned channel assignment can be stated as: There exists a subset $\mathcal{\widetilde{D}}$ that satisfies: Condition 1) $|\mathcal{\widetilde{D}}|=\min\{K,I/2\}$. Condition 2) For any tuples $(k,i,j)\in\mathcal{D}$ and $(\widetilde{k},\widetilde{i},\widetilde{j})\in\mathcal{\widetilde{D}}$, we have $k\neq\widetilde{k},i\neq\widetilde{i}$ and $j\neq\widetilde{j}$. The above definition is consistent with that of three-dimensional matching, which is NP-hard\cite{di,cui}. 
%We then give the computational complexity in this case. The total number of possible user-pairs is $(I/2-1)!$, while the number of channel assignment schemes for these user pairs is $K!$. Thus, even the special case of the original problems is not expected to be solved in polynomial time.

\vspace{-0.2cm}
\section*{Appendix~B: Proof of Lemma~\ref{lemma2}} \label{Appendix B}
The achievable rate region of user $i$ on subchannel $k$ can be characterized as
\begin{equation}
	\begin{aligned}
		C_{\text{rate}}=\Big\{R_{k,(i)}:R_{k,(1)}\leq\log_2\left(1+\frac{p_{k,(1)}}{m_{k,(1)}}\right), 
		R_{k,(2)}\leq\log_2\Big(1+\frac{p_{k,(2)}}{p_{k,(1)}+m_{k,(2)}}\Big),\forall{k,j}\Big\}.
	\end{aligned}
\end{equation}

When each $R_{k,(i)}$ reaches the boundary of the rate region and by introducing new variables $r_{k,(i)}=2^{R_{k,(i)}}$, the transmit power of each user will satisfy the following equations:
\vspace{-0.2cm}
\begin{equation}\label{opt_p}
	\begin{aligned}
p_{k,(1)}=(r_{k,(1)}-1)m_{k,(1)},\ p_{k,(2)}=(r_{k,(2)}-1)(m_{k,(1)}+p_{m,(1)}), \forall{k}\in\mathcal{K}.
	\end{aligned}
\end{equation} 

	Thus, the constraint (\ref{totalpower}) is equivalent to
	\vspace{-0.2cm}
\begin{equation} \label{gp_appendix_power}
	\begin{aligned}
	\sum_{k=1}^K[m_{k,(1)}r_{k,(1)}r_{k,(2)}+(m_{k,(2)}-m_{k,(1)})r_{k,(2)}]\leq{P_{\text{max}}}+\sum_{k=1}^Km_{k,(2)}.
	\end{aligned}
\end{equation} 

Under optimal decoding order, since the coefficient parts $(m_{k,(2)}-m_{k,(1)})$ are always positive, (\ref{gp_appendix_power}) is a convex constraint.

%\section*{Appendix~C: Proof of Proposition~2} \label{Appendix C}
%Since the power constraint is separate on each sub-carrier, the multi-carrier power allocation problem is equivalent to a single-carrier resource allocation problem. Following the ideas in \cite{power1,power2}, the power allocation scheme with QoS constraints follows the rules that the user with the highest decoding order is allocated as much power as possible while the other users only meet the minimum QoS requirements. Thus, when the channel gain $|h_{k,i}|$ is determined after beamforming design, the power allocated to the user with the lowest decoding order is given by
%\begin{equation}
%	P_{\pi_k^{-1}(N_{\text{max}})}^*=(2^{\gamma}-1)m_{\pi_k^{-1}(N_{\text{max}})}
%\end{equation}
%The user with the second-lowest decoding order will also be allocated minimum power to meet its QoS requirements while regarding the former user's signal as interference.
%Following this rule, the minimum power allocated to each user is given by
%\begin{equation}
%	P_{\pi_k^{-1}(i)}^*=(2^{\gamma}-1)(m_{\pi_k^{-1}(i)}+\sum_{j>i}P_j^*)
%\end{equation}

%\begin{figure}[t!]
%  \centering
%  \includegraphics[width=5in]{eps/converge_tradeoff.eps}\\
%  \caption{Illustration of an example and the expected results.}\label{tradeoff}
%\end{figure}

%\begin{equation}
%	A^{(i)}=\{i+kM|[k\in\{0\}\cup\mathcal{N}]\cap[i+kM\leq{N}]\}
%\end{equation}
\vspace{-0.2cm}
\bibliographystyle{IEEEtran}
\bibliography{reference}

 \end{document}